\newtheorem{lemma}{Lemma}[section] 
\newtheorem{definition}{Definition}[section]
\newcommand{\bed}{\begin{definition}}
\newcommand{\minimize}{\operatornamewithlimits{minimize}}
\newcommand{\eed}{\end{definition}}
\newcommand{\beq}{\begin{equation}}
\newcommand{\eeq}{\end{equation}}
\newtheorem{theorem}{Theorem}
\newcommand{\bitem}{\begin{itemize}}
\newcommand{\eitem}{\end{itemize}}
\newcommand{\maximize}{\mathrm{maximize}}
\newcommand{\trace}{\mathrm{trace}}
\newcommand{\beqn}{\begin{equation}}
\newcommand{\eeqn}{\end{equation}}
\newcommand{\balign}{\begin{align}}
\newcommand{\ealign}{\end{align}}
\newcommand{\sgn}{\mathrm{sgn}}
\title[Joint graphical lasso for estimation of multiple networks]{The joint graphical lasso for inverse covariance estimation across multiple classes}
\author{Patrick Danaher }
\address{Department of   Biostatistics,  University of Washington,  USA}
\author{Pei Wang \footnote{Corresponding Author. Email: pwang@fhcrc.org. Address: 1100 Fairview Avenue N., M2-B230, Seattle WA 98101.}}
\address{Public Health Sciences Division, Fred Hutchinson Cancer Research Center, USA}
\author[Danaher et al.]{Daniela M. Witten}
\address{Department of Biostatistics, University of Washington, USA}
\begin{document}


\newcommand{\bTheta}{{\bf \Theta}}
\newcommand{\bThetak}{{\bf \Theta}^{(k)}}
\newcommand{\bZk}{{\bf Z}^{(k)}}
\newcommand{\bZK}{{\bf Z}^{(K)}}
\newcommand{\bSk}{{\bf S}^{(k)}}
\newcommand{\bUk}{{\bf U}^{(k)}}
\newcommand{\bUK}{{\bf U}^{(K)}}
\newcommand{\bAk}{{\bf A}^{(k)}}
\newcommand{\bThetahat}{ \hat{\bTheta} }
\newcommand{\btheta}{\boldsymbol{\theta}}
\newcommand{\siginv}{{\bf \Sigma}^{-1}}
\newcommand{\sTheta}{ \{ \bTheta \}  }
\newcommand{\sZ}{ \{ {\bf Z} \}  }
\newcommand{\sU}{ \{ {\bf U} \}  }
\newcommand{\sThetahat}{ \{ \hat{\bTheta} \}  }
\newcommand{\fp}{ \ell_P }  

\begin{abstract}
We consider the problem of estimating multiple related Gaussian graphical models from a high-dimensional data set with observations belonging to distinct classes. We propose the \emph{joint graphical lasso}, which borrows strength across the classes in order to estimate multiple graphical models that share certain characteristics, such as the locations or weights of nonzero edges. Our approach is based upon maximizing a penalized log likelihood. We employ generalized fused lasso or group lasso penalties, and implement a fast ADMM algorithm to solve the corresponding convex optimization problems. The performance of the proposed method is illustrated through simulated and real data examples. \\

{\bf Keywords}:
{alternating directions method of multipliers; generalized fused lasso; group lasso; graphical lasso; network estimation; Gaussian graphical model; high-dimensional }
\end{abstract}

\section{Introduction}

In recent years, much interest has focused upon estimating an undirected graphical model on the basis of a $n \times p$ data
matrix $\bf X$, where $n$ is the number of observations and $p$ is the number of features.
Suppose that the observations  ${\bf x}_1, \ldots, {\bf x}_n \in \mathbb{R}^p$ are independent and identically distributed
$N(\boldsymbol\mu, {\bf \Sigma})$, where $\boldsymbol\mu \in \mathbb{R}^p$ and $\bf\Sigma$ is a positive definite $p \times p$ matrix.
 Then zeros in the inverse covariance matrix ${\bf \Sigma}^{-1}$ correspond to pairs of features
that are conditionally independent -- that is, pairs of variables that are independent of each other, given all of the other
variables in the data set.
In a Gaussian graphical model \citep{Lauritzen1996}, these conditional dependence relationships are represented by a graph in which nodes
 represent features and edges connect conditionally dependent pairs of features.

A natural way to estimate the \emph{precision} (or
\emph{concentration})
  matrix $\siginv$ is via maximum likelihood.
Letting $\bf S$ denote the empirical covariance matrix of $\bf X$, 
the Gaussian log likelihood takes the form (up to a constant)
\begin{equation}
\frac{n}{2} \left( \log \det {\siginv} - \trace ({\bf S} {\siginv}) \right).
 \label{loglik}
\end{equation}
Maximizing (\ref{loglik}) with respect to $\siginv$ yields the maximum likelihood estimate ${\bf S}^{-1}$.

However, two problems can arise in using this maximum likelihood
approach to estimate $\siginv$. First, in the
high-dimensional setting where the number of features $p$ is larger
than the number of observations $n$, the empirical covariance matrix
$\bf S$ is singular and so cannot be inverted to yield an estimate
of $\siginv$. If $p \approx n$, then even if $\bf S$ is not
singular, the maximum likelihood estimate for $\siginv$
 will suffer from very high variance. Second,
 one often is interested in identifying pairs of variables that are unconnected in the graphical model, i.e. that are conditionally
independent; these correspond to zeros in $\siginv$. But maximizing the log likelihood (\ref{loglik}) will in general yield
an estimate of $\siginv$ with no elements that are exactly equal to zero.

In recent years, a number of proposals have been made for estimating
$\siginv$ in the high-dimensional setting in such a way that the
resulting estimate is \emph{sparse}. \citet{MB2006} proposed
doing this via a penalized regression approach, which was extended
by \citet{Space}.  A number of authors have instead taken a
penalized log likelihood approach
\citep{YuanLin07,SparseInv,Rothman08}: rather than maximizing
(\ref{loglik}), one can instead solve the problem
\begin{equation}
\maximize_{\bTheta} \left\{ \log \det \bTheta - \trace ({\bf S }\bTheta) - \lambda ||\bTheta||_1 \right\},
 \label{loglikl1}
\end{equation}
where $\lambda$ is a nonnegative tuning parameter. The solution to this optimization problem
provides an estimate for $\siginv$.
The use of an $\ell_1$ or \emph{lasso}  \citep{Ti96} penalty on the elements of  $\bTheta$ 
has the effect that when the tuning parameter
$\lambda$ is large, some elements of the resulting precision matrix estimate will be
exactly equal to zero. Moreover, 
(\ref{loglikl1}) can be solved even if $p \gg n$. The solution to
the problem (\ref{loglikl1}) is referred to as the \emph{graphical
lasso}. Some authors have proposed applying the $\ell_1$ penalty in (\ref{loglikl1}) only to the off-diagonal elements of $\bf \Theta$.

Graphical models are especially of interest in the analysis of
 gene expression data, since
it is believed that genes operate in pathways, or networks.
Graphical models based on gene expression data can provide a useful
tool for visualizing the relationships among genes and for generating biological hypotheses.
The standard formulation for estimating a Gaussian graphical model
assumes that each observation is drawn from the same distribution.
However, in many datasets the observations may correspond to several
distinct classes, so the assumption that all observations are drawn
from the same distribution is inappropriate.
For instance, suppose that a cancer researcher  collects gene expression measurements for a set of cancer tissue samples and a set of normal tissue
samples. In this case, one might want to estimate a graphical model for the cancer samples and a graphical model for the normal samples.
One would expect the two graphical models to be similar to each other, since both are based upon the same type of tissue, but also to have important differences stemming from
 the fact that gene networks are often dysregulated in cancer.
Estimating separate graphical models for the cancer and normal
samples does not exploit the similarity between the true graphical
models. And estimating a single graphical model for the cancer and
normal samples ignores the fact that we do not expect the true
graphical models to be identical, and that the differences between
the graphical models may be of interest.

In this paper, we propose the \emph{joint graphical lasso},
a technique for jointly estimating multiple graphical models corresponding to distinct
but related conditions, such as cancer and normal tissue. Our approach is an extension of the graphical lasso (\ref{loglikl1})
to the case of multiple data sets. It is based upon a penalized log likelihood approach, where the choice of penalty depends on
the characteristics of the graphical models that we expect to be shared across conditions.

We illustrate our method with a small toy example  that consists of
observations from two classes. Within each class, the observations
are independent and identically distributed according to a normal
distribution. The two classes have distinct covariance matrices.
When we apply the graphical lasso separately to the observations in
each class, the resulting graphical model estimates are less
accurate than when we use our joint graphical lasso approach.
Results are shown in Figure \ref{fig:toy}.

\begin{figure}[htp]
\centering
\includegraphics[width=0.5\linewidth]{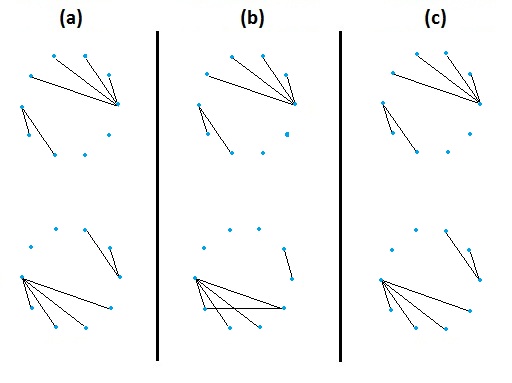}
 \caption{\it{ \label{fig:toy}} \it Comparison of the  graphical lasso with our joint graphical lasso in a toy example with two conditions, $p=10$ variables, and n=200 observations per condition.
{\bf (a)}: True networks. {\bf (b)}: Networks estimated by applying the
graphical lasso separately to each class. {\bf (c)}: Networks estimated by applying our joint graphical lasso proposal.}
\end{figure}

The rest of this paper is organized as follows.
In Section \ref{jointgraphicallasso}, we present the joint graphical lasso optimization problem.
Section \ref{alg} contains an alternating directions method of multipliers algorithm  for its solution.
In Section \ref{screen}, we present theoretical results that lead to massive gains
in the algorithm's computational efficiency.
Section \ref{related} contains a discussion of  related approaches from the literature, and in Section \ref{tuning_parameter_selection}
 we discuss tuning parameter selection.
In Section \ref{simstudy}, we illustrate the performance of our proposal in a simulation study. Section \ref{realstudy} contains an application
to a lung cancer gene expression dataset. The discussion is in Section \ref{discussion}.

\section{The joint graphical lasso}
\label{jointgraphicallasso}

We briefly introduce some notation that will be used throughout this paper.

We let  $K$ denote the number of classes in our data, and let
$\siginv_k$ denote the true precision matrix for the $k$th class. We
will seek to estimate $\siginv_1,\ldots,\siginv_K$ by formulating
convex optimization problems with arguments $ \sTheta =
\bTheta^{(1)}, \ldots, \bTheta^{(K)}$. The solutions
$\hat{\bTheta}^{(1)}, \ldots, \hat{\bTheta}^{(K)}$ to these
optimization problems will constitute estimates of
$\siginv_1,\ldots,\siginv_K$.

We will index matrix elements using $i=1,\ldots,p$, $j=1,\ldots,p$, and will index classes using $k=1,\ldots,K$.
 $\|{\bf A}\|_F$ will denote the Frobenius norm of matrix ${\bf A}$, \emph{i.e.}
 $\|{\bf A}\|_F = \sqrt{\sum_i \sum_j A_{ij}^2}$ .

\subsection{The general formulation for the joint graphical lasso}
Suppose that we are given $K$ data sets, ${\bf Y}^{(1)}, \ldots, {\bf Y}^{(K)}$, with $K \geq 2$.
 ${\bf Y}^{(k)}$ is a $n_k \times p$ matrix
consisting of  $n_k$ observations with measurements on a set of $p$
features, which are common to all $K$ data sets. Furthermore, we
assume that the $\sum_{k=1}^K n_k$ observations are independent, and
that the observations within each data set are identically
distributed: ${\bf y}_1^{(k)}, \ldots, {\bf y}_{n_k}^{(k)} \sim
N(\boldsymbol\mu_k, {\bf \Sigma}_k)$. Without loss of generality, we
assume that the features within each data set are centered such that
$\boldsymbol\mu_k={\bf 0}$. We let ${\bf S}^{(k)} = \frac{1}{n_k}(
{\bf Y}^{(k)}  )^T {\bf Y}^{(k)} $, the empirical covariance matrix
for ${\bf Y}^{(k)}$.
The log likelihood for the data takes the form (up to a constant)
\begin{equation}
\ell({\sTheta})=\frac{1}{2} \sum_{k=1}^K  n_k \left(\log \det {\bf \Theta}^{(k)} - \trace( {\bf S}^{(k)}{\bf \Theta}^{(k)}) \right).
\label{loglikjoint}
\end{equation}
Maximizing (\ref{loglikjoint}) with respect to ${\bf \Theta}^{(1)},\ldots,{\bf \Theta}^{(K)}$ yields the maximum likelihood estimate
$({\bf S}^{(1)})^{-1}, \ldots, ({\bf S}^{(K)})^{-1}$.

However, depending on the application, the maximum likelihood
estimates that result from (\ref{loglikjoint}) may not be
satisfactory. When $p$ is smaller than but close to $n_k$, the maximum likelihood estimate can have very high variance, and no elements of $({\bf S}^{(1)})^{-1}, \ldots, ({\bf S}^{(K)})^{-1}$ will be zero, leading to difficulties in interpretation. 
In addition, when $p>n_k$, the maximum likelihood estimate  becomes ill-defined.  
Moreover, if the $K$ data sets correspond to observations collected from $K$
distinct but related classes, then one might wish to borrow strength
across the $K$ classes to estimate the $K$ precision matrices,
rather than estimating each precision matrix separately.

Therefore, instead of estimating $\siginv_1, \ldots, {\siginv_K}$ by
maximizing (\ref{loglikjoint}), we consider the
penalized log likelihood and seek $\sThetahat$ solving
\begin{equation}
\maximize_{\sTheta} \left\{\sum_{k=1}^K  n_k \left(\log \det {\bTheta}^{(k)}
- \trace\left( {\bf S}^{(k)} {\bTheta}^{(k)}\right) \right)- P(\sTheta)
\right\}\label{loglikjointproposal}
\end{equation}
subject to the constraint that ${\bf \Theta}^{(1)}, \ldots, {\bf
\Theta}^{(K)}$ are positive definite. Here  $P(\sTheta)$ denotes a
convex penalty function, so that the objective in
(\ref{loglikjointproposal}) is strictly concave in $\sTheta$.
We propose to choose a penalty function $P$ that
will encourage ${\bThetahat}^{(1)}, \ldots, {\bThetahat}^{(K)}$ to share certain characteristics, such as the locations or values of the nonzero elements; moreover, we
would like the estimated precision matrices to be sparse.
In particular, we will consider penalty functions that take the form
$P(\sTheta)=\tilde{P}(\sTheta)+\lambda_1  \sum_k \sum_{i \neq j} |\theta^{(k)}_{ij}|$, where $\tilde{P}$ is a convex function and $\lambda_1$ is a nonnegative tuning parameter.
When $\tilde{P}(\sTheta)=0$, (\ref{loglikjointproposal}) amounts to performing $K$ uncoupled graphical lasso optimization problems (\ref{loglikl1}).
The $\tilde{P}$ penalty is chosen to encourage similarity across the $K$ estimated precision matrices; therefore, we refer to the solution to (\ref{loglikjointproposal}) as the \emph{joint graphical lasso} (JGL).
We discuss specific forms of the penalty function in (\ref{loglikjointproposal}) in the next section.

\subsection{ Two useful penalty functions}
\label{penalties}
In this subsection, we introduce two particular choices of the convex penalty function $P$ in (\ref{loglikjointproposal}) that lead to useful graphical model estimates.  
In Appendix 1, we further extend these proposals to work on the scale of partial correlations.

\subsubsection{The fused graphical lasso}
The \emph{fused graphical lasso} (FGL) is the solution to the problem (\ref{loglikjointproposal}) with  the penalty 
\begin{equation}
P(\sTheta) = \lambda_1  \sum_{k=1}^K \sum_{i \neq j}
|\theta^{(k)}_{ij}| + \lambda_2  \sum_{k < k'} \sum_{i, j}
|\theta^{(k)}_{ij}-\theta^{(k')}_{ij}|,
\label{FGLpenalty}
\end{equation}
where  $\lambda_1$ and $\lambda_2$ are nonnegative tuning
parameters. This is a  
\emph{generalized fused lasso} penalty \citep{HoeflingFLSA}, and results from applying
$\ell_1$ penalties to (1) each off-diagonal element of the $K$ precision
matrices,
and (2) differences between corresponding elements of each pair of
precision matrices. Like the graphical lasso, FGL results in sparse
estimates $\bThetahat^{(1)},\ldots,\bThetahat^{(K)}$ when the tuning
parameter $\lambda_1$ is large. In addition, many elements of
$\bThetahat^{(1)},\ldots,\bThetahat^{(K)}$ will  be
 identical across classes when the tuning parameter $\lambda_2$ is large
 \citep{TSRZ2005}.
Thus FGL borrows information aggressively across classes,
encouraging not only similar network structure but also similar edge
values.

\subsubsection{The group graphical lasso}
We  define  the \emph{group graphical lasso} (GGL) to be the solution to (\ref{loglikjointproposal}) with
\begin{equation}
P(\sTheta) = \lambda_1  \sum_{k=1}^K \sum_{i \neq j}
|\theta^{(k)}_{ij}| + \lambda_2  \sum_{i \neq j} \sqrt{\sum_{k=1}^K
{\theta^{(k)}_{ij}}^2 }. \label{GGLpenalty}
\end{equation}
Again, $\lambda_1$ and $\lambda_2$ are nonnegative tuning
parameters. A lasso penalty is applied to the
elements of the precision matrices and a \emph{group lasso} penalty is  applied to
the $(i,j)$ element  across all $K$
 precision matrices \citep{grouplasso}.
This group lasso penalty encourages a similar pattern of sparsity across
all of the precision matrices -- that is, there will be a tendency
for the zeros in the $K$ estimated precision matrices to occur in
the same places. Specifically, when $\lambda_1=0$ and $\lambda_2>0$,
each $\bThetahat^{(k)}$ will have an identical pattern of non-zero
elements. On the other hand, the lasso penalty encourages further
sparsity within each $\bThetahat^{(k)}$.

GGL encourages a weaker form of similarity across the $K$ precision
matrices than does FGL: the latter encourages shared edge values across the $K$ matrices, whereas the former encourages only a shared pattern
of sparsity.

\section{Algorithm for the joint graphical lasso problem}
\label{alg} 

\subsection{An ADMM algorithm}

We solve the  problem 
(\ref{loglikjointproposal}) using an \emph{alternating directions method of multipliers} (ADMM) algorithm.
We refer the reader to \citet{ADMMBoyd} for a thorough exposition of ADMM algorithms as well as their convergence properties, and
to   \citet{SimonTibs12}  and \citet{Mohan12} for recent  applications of ADMM  to  related problems.

To solve the problem (\ref{loglikjointproposal}) 
subject to the constraint that ${\bf \Theta}^{(k)}$ is positive definite for $k=1,\ldots,K$
using ADMM, we note that the problem can be rewritten as
\begin{equation}
\minimize_{\sTheta,\sZ} \left\{ - \sum_{k=1}^K  n_k \left(\log \det {\bf \Theta}^{(k)} - \trace( {\bf S}^{(k)} {\bf \Theta}^{(k)})\right)  + P(\sZ) \right\},
\label{loglikjointproposal2}
\end{equation}
subject to the positive-definiteness constraint as well as the constraint that ${\bf Z}^{(k)} = {\bf \Theta}^{(k)}$ for $k=1,\ldots,K$, where $\sZ = \{ {{\bf Z}^{(1)}}, \ldots,\bZK \}$.
The scaled augmented Lagrangian \citep{ADMMBoyd} for this problem is given by 
\begin{eqnarray}
L_{\rho}(\sTheta,\sZ,\sU)=  &-& \sum_{k=1}^K  n_k \left(\log \det \bThetak - \trace( \bSk \bThetak)\right)  + P(\sZ) \nonumber \\
  &+&  \frac{\rho}{2}  \sum_{k=1}^K ||\bThetak-\bZk + \bUk||_F^2,
\label{scaled.aug}
\end{eqnarray}
where $\sU = \{ {{\bf U}^{(1)}}, \ldots,\bUK \}$ are dual variables.
Roughly speaking, an ADMM algorithm corresponding to (\ref{scaled.aug}) results from iterating three simple steps. At the $i$th iteration, they are as follows:
\begin{enumerate}
\item  $\{ {\bf \Theta}_{(i)} \} \leftarrow \arg\min_{\sTheta} \left\{ L_{\rho} \left(\sTheta,\{ {\bf Z}_{(i-1)} \} ,\{ {\bf U}_{(i-1)} \} \right) \right\}$.
\item $\{ {\bf Z}_{(i)} \} \leftarrow  \arg\min_{\sZ} \left\{ L_{\rho} \left(     \{ {\bf \Theta}_{(i)} \},\sZ,\{ {\bf U}_{(i-1)} \}   \right) \right\}$.
\item  $\{ {\bf U}_{(i)} \} \leftarrow \{ {\bf U}_{(i-1)} \} + (\{ {\bf \Theta}_{(i)} \} -\{ {\bf Z}_{(i)} \})$.
\end{enumerate}
We now present the ADMM algorithm in greater detail.
\begin{center}
\textbf{ADMM algorithm for solving the joint graphical lasso problem}
\begin{enumerate}
\item Initialize the variables: $\bThetak = {\bf I}$, $\bUk ={\bf 0}$, $\bZk ={\bf 0}$ for $k=1,\ldots,K$.
\item Select a scalar $\rho>0$.
\item For $i=1,2,3,\ldots$ until convergence:
\begin{enumerate}
\item  For $k=1,\ldots,K$, update ${\bf \Theta}_{(i)}^{(k)}$ as the minimizer (with respect to ${\bf \Theta}^{(k)}$) of
$$-n_k \left(\log \det {\bf \Theta}^{(k)} - \trace( {\bf S}^{(k)} {\bf \Theta}^{(k)})\right)  
 + \frac{\rho}{2} ||\bThetak - {\bf Z}_{(i-1)}^{(k)} + {\bf U}_{(i-1)}^{(k)}  ||_F^2.$$ 
 Letting ${\bf V}{\bf D}{\bf V}^T$ denote the eigendecomposition of $\bSk-{\rho} {\bf Z}_{(i-1)}^{(k)} / n_k+{\rho} {\bf U}_{(i-1)}^{(k)}   / n_k$, the solution is given
 \citep{WittenTibsScout09}  by ${\bf V} \tilde{\bf D} {\bf V}^T$, where $\tilde{\bf D}$ is the diagonal matrix with $j$th diagonal element 
$$\frac{n_k}{2 \rho} \left(-D_{jj} + \sqrt{D_{jj}^2 + 4 \rho/n_k}\right).$$ 
%
\item Update $\{ {\bf Z}_{(i)} \}$ as the minimizer (with respect to $\sZ$) of
\small
\begin{equation}
    \frac{\rho}{2}  \sum_{k=1}^K ||\bZk - ({\bf \Theta}_{(i)}^{(k)} + {\bf U}_{(i-1)}^{(k)}  )||_F^2 +  P(\sZ). 
  \label{goodlord}
\end{equation}
\normalsize
\item 
For $k=1,\ldots,K$, update ${\bf U}_{(i)}^{(k)}$ as ${\bf U}_{(i-1)}^{(k)}  + ({\bf \Theta}_{(i)}^{(k)} - {\bf Z}_{(i)}^{(k)})$.


\end{enumerate}
\end{enumerate}
\end{center}
The final $\hat\bTheta^{(1)}, \ldots, \hat\bTheta^{(K)}$ that result from this algorithm are the JGL estimates of $\siginv_1,\ldots,\siginv_K$.
This algorithm is guaranteed to converge to the global optimum \citep{ADMMBoyd}. We note that the positive-definiteness constraint on the estimated precision matrices is naturally 
enforced by the update in Step (c)(i).

Details of the minimization of (\ref{goodlord})  will depend on the  form of the convex penalty function $P$.
We note that  the task of minimizing (\ref{goodlord}) can be re-written as
\begin{equation}
 \minimize_{\sZ} \left\{   \frac{\rho}{2}  \sum_{k=1}^K ||\bZk - {\bf A}^{(k)}||_F^2 +  P(\sZ) \right\},
  \label{simpler}
\end{equation}
where
\begin{equation}
{\bf A}^{(k)} = {\bf \Theta}_{(i)}^{(k)} + {\bf Y}_{(i-1)}^{(k)}.
\label{A}
\end{equation}
We will see in Section \ref{FGLGGLsol}
 that for the FGL and GGL penalties, solving (\ref{simpler}) is a simple task, regardless of the value of $K$.

The algorithm given above involves computing the eigen decomposition of  a $p \times p$ matrix, which can be computationally demanding when $p$ is large. However, in Section \ref{screen}, we will present two theorems that reveal that when the values of the tuning parameters $\lambda_1$ and $\lambda_2$  are large, one can obtain
the \emph{exact} solution to the JGL optimization problem without ever computing the eigen decomposition of  a $p \times p$ matrix.
Therefore, solving the JGL problem is fast even when $p$ is quite large. In Section \ref{realstudy}, we will see that one can perform FGL with $K=2$ classes  and almost 18,000 features in under 2 minutes.

\subsection{Solving (\ref{simpler}) for the joint graphical lasso}
\label{FGLGGLsol}
We now consider the problem of solving
(\ref{simpler}) if $P$ is a generalized fused lasso or
group lasso penalty.

\subsubsection{Solving (\ref{simpler}) for FGL}
\label{fgl.ggd.update}
If $P$ is the  penalty given in (\ref{FGLpenalty}), then 
(\ref{simpler}) takes the form
\begin{equation}
\minimize_{\sZ} \left\{    \frac{\rho}{2} \sum_{k=1}^K ||{\bf Z}^{(k)} - {\bf A}^{(k)} ||_F^2  + \lambda_1  \sum_{k=1}^K \sum_{i \neq j} |Z^{(k)}_{ij}|
+ \lambda_2  \sum_{k < k'} \sum_{i,j} |Z^{(k)}_{ij}-Z^{(k')}_{ij}|  \right\}.
\label{simpler.fgl}
\end{equation}
Now (\ref{simpler.fgl}) is completely separable with respect to each pair of matrix elements $(i,j)$: that is, one can simply solve, for each $(i,j)$,
\begin{equation}
\minimize_{Z_{ij}^{(1)}, \ldots, Z_{ij}^{(K)}} \left\{    \frac{\rho}{2} \sum_{k=1}^K (Z_{ij}^{(k)} - {A}_{ij}^{(k)} )^2  + \lambda_1  1_{i \neq j} \sum_{k=1}^K |Z^{(k)}_{ij}|
+ \lambda_2  \sum_{k < k'}  |Z^{(k)}_{ij}-Z^{(k')}_{ij}|  \right\}.
\label{evensimpler.fgl}
\end{equation}
This is a special case of  the \emph{fused lasso signal
approximator} \citep{HoeflingFLSA}  in which there is a fusion
between each pair of variables. A very efficient algorithm  for this
special case, which can be performed in $O(K \log K)$ operations,
is available   \citep{Hocking2011,FLSA_closed_form_sol,FLSA_closed_form_sol_RyanTibs}.

In fact, when $K=2$, (\ref{evensimpler.fgl}) has a very simple closed form solution. When
 $\lambda_1=0$, it is easy to verify that the solution to (\ref{evensimpler.fgl}) takes the form
\begin{equation}
(\hat{Z}_{ij}^{(1)},\hat{Z}_{ij}^{(2)}) = \left\{ \begin{array}{lll}
 (A_{ij}^{(1)}- \lambda_2/\rho,A_{ij}^{(2)}+ \lambda_2/\rho) & \mbox{if $A_{ij}^{(1)} > A_{ij}^{(2)} + 2  \lambda_2/\rho$} \\
 (A_{ij}^{(1)}+ \lambda_2/\rho,A_{ij}^{(2)}- \lambda_2/\rho) & \mbox{if $A_{ij}^{(2)} > A_{ij}^{(1)} + 2  \lambda_2/\rho$} \\
 (\frac{A_{ij}^{(1)}+A_{ij}^{(2)}}{2},\frac{A_{ij}^{(1)}+A_{ij}^{(2)}}{2}) & \mbox{if $|A_{ij}^{(1)} - A_{ij}^{(2)}| \leq 2  \lambda_2/\rho$} \\
\end{array}.
\right.
\label{prethresholdsol}
\end{equation}
And when $\lambda_1>0$, the solution to (\ref{evensimpler.fgl}) 
can be obtained through soft-thresholding (\ref{prethresholdsol}) by
$\lambda_1/\rho$ \cite[see][]{PathwiseCoord}. Here the
soft-thresholding operator is defined  as $S(x,c)=\sgn(x)(|x|-c)_+$,
where $a_+ = \max(a,0)$.

\subsubsection{Solving (\ref{simpler}) for GGL}
\label{ggl.ggd.update}
If  $P$ is the group lasso penalty (\ref{GGLpenalty}), then
(\ref{simpler}) takes the form
\begin{equation}
\minimize_{\sZ} \left\{    \frac{\rho}{2} \sum_{k=1}^K ||{\bf Z}^{(k)} - {\bf A}^{(k)} ||_F^2  +
\lambda_1  \sum_{k=1}^K \sum_{i \neq j} |Z^{(k)}_{ij}| + \lambda_2  \sum_{i \neq j} \sqrt{\sum_{k} {Z^{(k)}_{ij}}^2 } \right\}.
\label{simpler.ggl}
\end{equation}
First, for all  $i=1,\ldots,p$ and $k=1,\ldots,K$,  it is easy to see that the solution to (\ref{simpler.ggl}) has $\hat{Z}^{(k)}_{ii} =
A^{(k)}_{ii}$. And one can show that the off-diagonal elements take the form  \citep{SparseGroupLasso}
\begin{equation}
    \hat{Z}^{(k)}_{ij} = S(A^{(k)}_{ij}, \lambda_1/\rho )  \left(1 - \frac{ \lambda_2}{\rho \sqrt{ \sum_{k=1}^K S(A^{(k)}_{ij}, \lambda_1/\rho)^2 } } \right)_+,
\label{diaggglsoln}\end{equation} where $S$ denotes the
soft-thresholding operator.


\section{Faster computations for FGL and GGL}
\label{screen}

We now present two theorems that lead to substantial computational improvements to the JGL algorithm presented in Section
\ref{alg}.
Using these theorems, one can inspect the empirical covariance matrices ${\bf S}^{(1)}, \ldots, {\bf S}^{(K)}$ in order to determine whether the solution
to the JGL
optimization problem is block diagonal
 after some permutation of the features.
Then one can simply perform the JGL algorithm on the features within each block separately, in order to obtain \emph{exactly} the same solution that would have been obtained by
applying the algorithm to all $p$ features. This leads to huge speed improvements since it obviates the need to ever compute the eigen decomposition of a $p \times p$ matrix.
Our results mirror recent
improvements in
algorithms for solving the graphical lasso problem  \citep{WittenFriedman11,MazumderHastie2012}.

 For instance, suppose that for a given choice of $\lambda_1$ and $\lambda_2$,  we determine that the estimated inverse covariance matrices $\hat{\bf \Theta}^{(1)}, \ldots, \hat{\bf \Theta}^{(K)}$
  are block diagonal, each with the same  $R$  blocks, the $r$th of which contains $p_r$ features,  $\sum_{r=1}^R p_r = p$. Then  in each iteration of the JGL algorithm,
rather than having to compute the eigen decomposition of  $K$ $p \times p$ matrices, we need only compute eigen decompositions of  matrices of dimension $p_1 \times p_1, \ldots, p_R \times p_R$. This leads to a potentially massive reduction in computational complexity from $O(p^3)$ to $\sum_{r=1}^R O(p_r^3)$.

We begin with a very simple lemma for which the proof follows by inspection of (\ref{loglikjointproposal}).
The lemma can be extended by induction to any number of blocks.

\begin{lemma}
\label{screeninglemma}
Suppose that the solution to the FGL or GGL optimization problem is block diagonal with known blocks. That is,
the features can be reordered in such a way that each estimated inverse covariance matrix takes the form
\begin{equation}
\bThetahat^{(k)} = \left( \begin{array}{cc} \bThetahat^{(k)}_{1} & 0 \\ 0 & \bThetahat^{(k)}_{2} \end{array} \right)
\label{soln}
\end{equation}
where each of  $\bThetahat^{(1)}_1, \ldots, \bThetahat^{(K)}_1$ has the same dimension.
Then, $\bThetahat^{(1)}_1, \ldots, \bThetahat^{(K)}_1$ and  $\bThetahat^{(1)}_2, \ldots, \bThetahat^{(K)}_2$
can be obtained by solving the FGL or GGL optimization problem on just the corresponding set of features.
\end{lemma}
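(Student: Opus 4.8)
The plan is to show that when the feasible set and the objective both decouple along the block structure, the minimization splits into independent subproblems. First I would observe that if every $\bThetak$ is constrained (or known) to be block diagonal of the form \eqref{soln}, then writing $\bThetak_1$ and $\bThetak_2$ for the two diagonal blocks, the log-determinant term decomposes additively: $\log \det \bThetak = \log \det \bThetak_1 + \log \det \bThetak_2$. Likewise, because ${\bf S}^{(k)}$ pairs only with the corresponding blocks of $\bThetak$ in the trace (the off-block entries of $\bThetak$ are zero), one has $\trace(\bSk \bThetak) = \trace(\bSk_1 \bThetak_1) + \trace(\bSk_2 \bThetak_2)$, where $\bSk_1,\bSk_2$ are the corresponding principal submatrices of $\bSk$. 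So the Gaussian log-likelihood term in \eqref{loglikjointproposal} separates into a term depending only on $\{\bThetak_1\}_{k=1}^K$ and a term depending only on $\{\bThetak_2\}_{k=1}^K$.

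Next I would check that both the FGL penalty \eqref{FGLpenalty} and the GGL penalty \eqref{GGLpenalty} separate the same way. Each penalty is a sum over index pairs $(i,j)$ of a function of $(\theta^{(1)}_{ij},\ldots,\theta^{(K)}_{ij})$; when all matrices are block diagonal with the given blocks, every cross-block pair $(i,j)$ contributes zero (since all $\theta^{(k)}_{ij}=0$ there and all the penalty terms vanish at the origin), so the penalty reduces to a sum of an expression in the first-block entries and an expression in the second-block entries. Hence $P(\sTheta) = P_1(\{\bThetak_1\}) + P_2(\{\bThetak_2\})$ for the induced penalties $P_1,P_2$ of the same FGL or GGL form on the smaller feature sets.

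Combining these two observations, the entire objective in \eqref{loglikjointproposal} — equivalently the minimization in \eqref{loglikjointproposal2} — is a sum of a function of $\{\bThetak_1\}$ alone and a function of $\{\bThetak_2\}$ alone, and the constraint set (positive definiteness of each $\bThetak$, given block-diagonal form) likewise factors, since a block-diagonal matrix is positive definite if and only if each diagonal block is. A minimizer of a separable objective over a product constraint set is obtained by minimizing each summand independently, so $\{\bThetahat^{(k)}_1\}$ solves the JGL problem restricted to the first block of features and $\{\bThetahat^{(k)}_2\}$ solves it restricted to the second, which is exactly the claim. The extension to $R$ blocks follows by induction, as already noted in the text.

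I do not anticipate a serious obstacle here: the result is genuinely "by inspection." The only point requiring a little care is making explicit that the penalty terms for cross-block index pairs vanish rather than merely decouple — this uses that each summand of $P$ is minimized (indeed equals zero) when all its arguments are zero, which holds for absolute values, differences of equal quantities, and group-lasso norms alike. One should also note that the diagonal entries $\theta^{(k)}_{ii}$, which are unpenalized in the $\lambda_1$ term, always lie within a single block, so no subtlety arises there.
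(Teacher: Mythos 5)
Your proof is correct and is essentially the argument the paper has in mind: the paper states only that the lemma ``follows by inspection of (\ref{loglikjointproposal}),'' and your write-up spells out exactly that inspection --- the log-determinant, trace, penalty, and positive-definiteness constraint all separate across the blocks, so the block-diagonal optimum must solve each subproblem. No gaps; the point you flag about cross-block penalty terms vanishing (not merely decoupling) is the right detail to make explicit.
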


We now present the key results.
Theorems \ref{thm2}  and \ref{thm3} outline necessary and sufficient conditions for the presence of block diagonal structure in the FGL and GGL optimization problems,
and are proven in Appendix 2.

\begin{theorem}
\label{thm2}
Consider the FGL optimization problem with $K=2$ classes. Let $C_1$ and $C_2$ be a non-overlapping partition of the $p$ variables such that $C_1 \cap C_2 = \emptyset$, $C_1 \cup C_2 = \{ 1, \ldots, p \}$.
The following conditions are necessary and sufficient for the variables in $C_1$ to be completely disconnected from those in $C_2$ in each of the resulting  network estimates:
\begin{enumerate}
\item $|n_1 S^{(1)}_{ij}| \leq \lambda_1 + \lambda_2$ for all $i \in C_1$ and $j \in C_2$,
\item $|n_2 S^{(2)}_{ij}| \leq \lambda_1 + \lambda_2$ for all $i \in C_1$ and $j \in C_2$, and
\item$|n_1 S^{(1)}_{ij}+ n_2 S^{(2)}_{ij}| \leq 2\lambda_1$ for all $i \in C_1$ and $j \in C_2$.
\end{enumerate}
Furthermore, if $K>2$, then
\begin{equation}
|n_k S^{(k)}_{ij}| \leq \lambda_1\mathrm{\;\; for\; all \;\;} i \in C_1, \; j \in C_2, \; k=1,\ldots,K
\end{equation}
is a sufficient condition for the variables in $C_1$ to be completely disconnected from those in $C_2$.
\end{theorem}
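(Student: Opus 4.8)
The plan is to characterize block-diagonal structure in the solution through the subgradient (KKT) conditions of the FGL optimization problem, mirroring the strategy used for the ordinary graphical lasso in \citet{WittenFriedman11} and \citet{MazumderHastie2012}. First I would write the stationarity conditions for \eqref{loglikjointproposal} with the FGL penalty \eqref{FGLpenalty}. Since the objective is strictly concave (strictly convex after negating), the solution $\bThetahat^{(1)}, \bThetahat^{(2)}$ is unique, and it is block diagonal with blocks $C_1, C_2$ if and only if there exists a feasible subgradient certificate supporting the block-diagonal candidate. Concretely, for the candidate solution obtained by solving FGL separately on $C_1$ and on $C_2$ (with the off-block entries set to zero), one needs, for each $i \in C_1$, $j \in C_2$ and each $k$, that $n_k S^{(k)}_{ij}$ can be absorbed by the subdifferential of the penalty evaluated at $\theta^{(k)}_{ij} = 0$. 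Here I would use the fact that $(\log\det \bThetak)$ has gradient $(\bThetak)^{-1}$, so the $(i,j)$ stationarity equation for $\theta^{(k)}_{ij}=0$ reads $-n_k (\bThetak)^{-1}_{ij} + n_k S^{(k)}_{ij} + (\text{subgradient of penalty}) = 0$; and when the matrix is block diagonal, $(\bThetak)^{-1}$ is also block diagonal, so $(\bThetak)^{-1}_{ij}=0$ for $i\in C_1, j\in C_2$. Thus the condition collapses to: $n_k S^{(k)}_{ij}$ lies in the subdifferential set of the penalty at zero.

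Next I would compute that subdifferential explicitly for $K=2$. The relevant part of the FGL penalty in the entries $(\theta^{(1)}_{ij},\theta^{(2)}_{ij})$ with $i\neq j$ is $\lambda_1(|\theta^{(1)}_{ij}| + |\theta^{(2)}_{ij}|) + \lambda_2 |\theta^{(1)}_{ij} - \theta^{(2)}_{ij}|$. Its subdifferential at $(0,0)$ is the set of pairs $(g_1, g_2)$ with $g_1 = \lambda_1 s_1 + \lambda_2 t$, $g_2 = \lambda_1 s_2 - \lambda_2 t$ for $s_1, s_2, t \in [-1,1]$. So the block-diagonal KKT conditions hold iff for every $i\in C_1, j\in C_2$ there exist such $s_1,s_2,t$ with $n_1 S^{(1)}_{ij} = \lambda_1 s_1 + \lambda_2 t$ and $n_2 S^{(2)}_{ij} = \lambda_1 s_2 - \lambda_2 t$. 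I would then show this system is solvable precisely when the three stated inequalities hold: (i) and (ii) are the "marginal" feasibility bounds $|n_k S^{(k)}_{ij}| \le \lambda_1 + \lambda_2$ (take $t = 0$ or saturate), and (iii), $|n_1 S^{(1)}_{ij} + n_2 S^{(2)}_{ij}| \le 2\lambda_1$, comes from adding the two equations to eliminate $t$: the sum equals $\lambda_1(s_1 + s_2)$, which must lie in $[-2\lambda_1, 2\lambda_1]$. The reverse direction — that (i)–(iii) together guarantee a valid choice of $s_1,s_2,t$ — is a short explicit construction (a small two-variable linear feasibility argument over a box).

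For the $K > 2$ sufficiency claim, I would not attempt a tight characterization; instead I observe that if $|n_k S^{(k)}_{ij}| \le \lambda_1$ for all $i\in C_1, j\in C_2, k$, then each off-block entry's stationarity equation can be satisfied using only the pure $\ell_1$ part of the penalty (setting $\theta^{(k)}_{ij}=0$ with subgradient $n_k S^{(k)}_{ij}/\lambda_1 \in [-1,1]$ for the $\lambda_1$ term and choosing all fusion subgradients to be zero), and that zeroing all off-block entries makes every precision matrix block diagonal, hence the inverses block diagonal, so the $(\bThetak)^{-1}_{ij}$ terms vanish as before. Combined with Lemma \ref{screeninglemma}, which guarantees that solving FGL blockwise reproduces the global solution, this certificate is valid, so the condition is sufficient.

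The main obstacle I anticipate is the "only if" direction for $K=2$: one must argue that block-diagonal structure of the \emph{solution} forces these inequalities on the \emph{data}, which requires knowing that $(\bThetahat^{(k)})^{-1}$ inherits the block structure (immediate from block-diagonal matrix inversion) and, more delicately, that the subgradient certificate is not merely sufficient but necessary — i.e. uniqueness of the solution (from strict convexity) lets one read off the KKT conditions in both directions. Care is also needed because the penalty is nondifferentiable exactly at the candidate point $\theta^{(k)}_{ij}=0$, so the argument is genuinely a statement about set membership in the subdifferential rather than an equation; handling the fusion term $|\theta^{(1)}_{ij}-\theta^{(2)}_{ij}|$ at $0$ (where it too is nonsmooth) is what produces the extra parameter $t$ and hence condition (iii). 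Everything else is routine.
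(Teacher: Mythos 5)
Your proposal follows essentially the same route as the paper: reduce to the per-entry KKT/subgradient certificate at the block-diagonal candidate (using that the inverse of a block-diagonal matrix is itself block diagonal, so the off-block entries of $(\bTheta^{(k)})^{-1}$ vanish), and show that the certificate $(s_1,s_2,t)\in[-1,1]^3$ exists if and only if the three stated inequalities hold, with the $K>2$ sufficiency handled by taking all fusion subgradients to be zero. The one step you defer --- the explicit construction of $(s_1,s_2,t)$ from conditions (i)--(iii) --- is exactly the content of the paper's Lemma A.2, which carries it out by a two-case analysis on whether $n_1 S^{(1)}_{ij}-n_2 S^{(2)}_{ij}$ exceeds $2\lambda_2$; your characterization of it as a small linear feasibility problem over a box is accurate.
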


\begin{theorem}
\label{thm3}
Consider the GGL optimization problem with $K \geq 2$ classes.
 Let $C_1$ and $C_2$ be a non-overlapping partition of the $p$ variables, such that 
$C_1 \cap C_2 = \emptyset$, $C_1 \cup C_2 = \{ 1, \ldots, p \}$.
 The following condition is necessary and sufficient for the variables in $C_1$ to be completely disconnected from those in $C_2$ in each of the resulting network estimates:
\begin{equation}
\sum_{k=1}^K (|n_kS^{(k)}_{ij}|-\lambda_1)_+^2 \leq \lambda_2^2 \mathrm{\;\; for\; all \;\;} i \in C_1, \; j \in C_2 \;.
\label{thm3condition}
\end{equation}
\end{theorem}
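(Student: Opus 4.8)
The plan is to characterize the global optimum of the GGL problem, (\ref{loglikjointproposal}) with penalty (\ref{GGLpenalty}), through its subgradient (KKT) optimality conditions. Since that objective is strictly concave in $\sTheta$ and the positive-definiteness constraint is inactive (the $\log\det$ barrier enforces it), the optimum is unique and is characterized by stationarity together with dual feasibility. Differentiating the smooth part with respect to an off-diagonal entry $\theta^{(k)}_{ij}$ yields $n_k\big((\bThetak)^{-1}_{ij}-S^{(k)}_{ij}\big)$ (a factor of $2$ from symmetric differentiation appears on both the likelihood and the penalty and cancels); the $\ell_1$ term contributes a subgradient $\gamma^{(k)}_{ij}\in[-1,1]$, and the group term $\sqrt{\sum_k{\theta^{(k)}_{ij}}^2}$ contributes a vector $\delta_{ij}\in\mathbb{R}^K$ which, at a point where the entire group $(\theta^{(1)}_{ij},\dots,\theta^{(K)}_{ij})$ vanishes, ranges over the whole ball $\{\|\delta\|_2\le 1\}$. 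Thus, for $i\in C_1$, $j\in C_2$ (automatically $i\ne j$), stationarity at a candidate with $\hat\theta^{(k)}_{ij}=0$ for all $k$ reads: there exist $\gamma^{(k)}\in[-1,1]$ and $\delta\in\mathbb{R}^K$ with $\|\delta\|_2\le 1$ such that $n_kS^{(k)}_{ij}-n_k(\bThetak)^{-1}_{ij}=\lambda_1\gamma^{(k)}+\lambda_2\delta^{(k)}$ for every $k$.

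For sufficiency, assume (\ref{thm3condition}). Reorder the features so that $C_1,C_2$ are contiguous, solve the GGL problem separately on the features in $C_1$ and in $C_2$, and assemble the two (unique, positive definite) solutions into block-diagonal matrices $\hat\bTheta^{(1)},\dots,\hat\bTheta^{(K)}$; by Lemma \ref{screeninglemma} these blocks are exactly what the full problem would produce \emph{if} its solution were block diagonal, so it suffices to verify the KKT conditions for the full problem at this candidate. Within each block the stationarity, dual-feasibility and complementary-slackness conditions hold by construction. For the cross-block entries, the inverse of a block-diagonal matrix is block-diagonal, so $(\hat\bTheta^{(k)})^{-1}_{ij}=0$ for $i\in C_1$, $j\in C_2$, and the stationarity condition above reduces to $n_kS^{(k)}_{ij}=\lambda_1\gamma^{(k)}+\lambda_2\delta^{(k)}$. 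We are free to pick each $\gamma^{(k)}$; choosing it to minimize $|n_kS^{(k)}_{ij}-\lambda_1\gamma^{(k)}|$ over $[-1,1]$ leaves residual $(|n_kS^{(k)}_{ij}|-\lambda_1)_+$, and a feasible $\delta$ then exists if and only if $\sum_k(|n_kS^{(k)}_{ij}|-\lambda_1)_+^2\le\lambda_2^2$. Hence (\ref{thm3condition}) makes the candidate satisfy all KKT conditions, so it is the global optimum and $C_1,C_2$ are disconnected in every $\hat\bTheta^{(k)}$.

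For necessity, suppose the GGL solution is block diagonal with respect to $\{C_1,C_2\}$. Then $\hat\theta^{(k)}_{ij}=0$ and, the solution being block diagonal, $(\hat\bTheta^{(k)})^{-1}_{ij}=0$ for $i\in C_1$, $j\in C_2$; the KKT conditions at the optimum then supply $\gamma^{(k)}\in[-1,1]$ and $\delta$ with $\|\delta\|_2\le1$ obeying $n_kS^{(k)}_{ij}=\lambda_1\gamma^{(k)}+\lambda_2\delta^{(k)}$, and the same minimization argument forces $\sum_k(|n_kS^{(k)}_{ij}|-\lambda_1)_+^2\le\lambda_2^2$, i.e.\ (\ref{thm3condition}).

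I expect the only real subtlety to be the bookkeeping around the group-lasso subgradient --- cleanly justifying that it fills the entire radius-$\lambda_2$ ball when the whole group is zero, and tracking the symmetry/factor-of-$2$ conventions so that the condition emerges exactly in the stated form; once the KKT system is in hand, the equivalence with (\ref{thm3condition}) is a one-line minimization over the free multipliers $\gamma^{(k)}$.
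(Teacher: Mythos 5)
Your proof is correct and follows essentially the same route as the paper's: write the KKT stationarity conditions, assemble a block-diagonal candidate from the two subproblems (so that $(\hat{\bTheta}^{(k)})^{-1}$ also vanishes on the cross-block entries), and reduce the question to the existence of subgradients $\gamma^{(k)}\in[-1,1]$ and $\delta$ with $\|\delta\|_2\le 1$ solving the cross-block equations. The only difference is in how that scalar feasibility question is settled: the paper isolates it as Lemma \ref{lemma2} and constructs the multipliers explicitly case by case, whereas you minimize $|n_kS^{(k)}_{ij}-\lambda_1\gamma^{(k)}|$ over $\gamma^{(k)}\in[-1,1]$ to leave the residual $(|n_kS^{(k)}_{ij}|-\lambda_1)_+$ and read off the condition directly --- a slightly more streamlined argument reaching the same conclusion.
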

Theorems \ref{thm2} and \ref{thm3} allow us to quickly check if,
given a partition of the features $C_1$ and $C_2$, the solution to
the JGL optimization problem is block diagonal with one block
corresponding to features in $C_1$ and one block corresponding to
features in $C_2$. In practice, for any given
$(\lambda_1, \lambda_2)$,
we can quickly perform the following two-step procedure to identify
any block structure in the FGL or GGL solution:
\begin{enumerate}
\item Create $\bf M$, a $p \times p$ matrix with $M_{ii}=1$ for $i=1,\ldots,p$.
For $i \neq j$, let $M_{ij}=0$ if  the conditions specified in Theorem \ref{thm2} are met for that pair of variables and the FGL penalty is used, or if the condition of Theorem \ref{thm3} is met for that pair of variables and the GGL penalty is used.
Otherwise, set $M_{ij}=1$.
 \item Identify the connected components of the undirected graph whose adjacency matrix is given by $\bf M$. Note that this can be performed in $O(|M|)$ operations, where $|M|$ is the number of nonzero elements in  $\bf M$ \citep{Tarjan72}.
 \end{enumerate}
 Theorems \ref{thm2} and \ref{thm3} guarantee that the connected components identified in Step (b) correspond to distinct blocks in the FGL or GGL solutions.
 Therefore, one can quickly obtain these solutions by solving the
 FGL or GGL optimization problems on the submatrices of these $K$ $p \times p$ empirical 
 covariance matrices that correspond to that block diagonal structure. Consequently,
we  can  obtain the \emph{exact} solution to the JGL optimization
problem on extremely high-dimensional data sets that would otherwise be computationally
intractable. For instance, in Section \ref{realstudy} we performed
FGL on a gene expression data set with almost $18,000$ features in
under two minutes.

As pointed out by a reviewer, Theorems \ref{thm2} and \ref{thm3}
lead to speed improvements only if the tuning parameters $\lambda_1$
and $\lambda_2$ are sufficiently large. We argue that this will in fact be
the case in most practical applications of JGL. When network
estimation is performed for the sake of data exploration and when
$p$ is large, only a very sparse network estimate will be useful; otherwise,
interpretation of the estimate will be impossible.
  Even when data exploration is not the end goal of the analysis, large values of $\lambda_1$ and $\lambda_2$ will generally be used, since most data sets
cannot reasonably support estimation of $Kp(p+1)/2$ nonzero parameters when $n \ll p$.

\section{Relationship to previous proposals}
\label{related} Several past proposals have been made to jointly
estimate graphical models on the basis of observations drawn from
distinct conditions.
Some proposals have used time-series data
to define time-varying networks in the context of continuous or binary data
 \citep{Zhou2008,KELLER,AhmedTESLA,Kolar2009,Song2009,KolarXing2010}.
 \citet{Guo2011} instead describe a likelihood-based method for estimating precision matrices across multiple related classes simultaneously.
They employ a hierarchical penalty that forces similar patterns of sparsity across classes, an approach that is similar in spirit to GGL.

Our FGL and GGL proposals have a number of advantages over these existing approaches.
Methods for estimating time-varying networks cannot be easily extended to the setting where the classes lack a natural ordering.
\citet{Guo2011}'s proposal is a closer precursor to our method, and can in fact be stated as an instance of the problem (\ref{loglikjointproposal}) with a \emph{hierarchical group lasso penalty} 
\begin{equation}
P(\sTheta) = \lambda  \sum_{i \neq j} \sqrt{\sum_{k} {|\theta^{(k)}_{ij}}| }
\label{guo}
\end{equation}
 that encourages a shared pattern of sparsity across the $K$ classes.
But the approach of \citet{Guo2011} has a number of  disadvantages relative to FGL and GGL.
(1) The penalty (\ref{guo}) is not convex, so convergence to the wrong local maximum is possible. (2) Because (\ref{guo}) is not convex,
 it is not possible to  achieve
 the speed improvements described in Section \ref{screen}. Consequently, the \citet{Guo2011} proposal is quite slow relative to our approach, as seen in Figures \ref{Sixplots}(e), \ref{Sixplots.K2}(e), and \ref{Sixplots.onebignet}(e), and
 essentially cannot be applied to very high-dimensional data sets.
(3) Unlike FGL and GGL, it uses just one tuning parameter, and is unable to control separately the sparsity level and the extent of network similarity.
(4) In cases where we expect edge values as well as network structure to be similar between classes, FGL is much better suited than
 GGL and \citet{Guo2011}'s proposal, both of which encourage shared patterns of sparsity but ignore the signs and values of the nonzero edges.

 \citet{Guo2011}'s proposal is included in the simulation study in Section \ref{simstudy}.

\section{Tuning parameter selection}
\label{tuning_parameter_selection}
One can  select tuning parameters for JGL  using an
approximation of the Akaike Information Criterion (AIC),
\begin{equation}
AIC(\lambda_1,\lambda_2) = \sum_{k=1}^K \left[n_k \trace( {\bf S}^{(k)} \hat{\bTheta}^{(k)}_{\lambda_1,\lambda_2})
 - n_k \log \det \hat{\bTheta}^{(k)}_{\lambda_1,\lambda_2} + 2 E_k \right],
 \label{AIC}
\end{equation}
where $\{\hat{\bTheta}^{(k)}_{\lambda_1,\lambda_2}\}$ is the set of
estimated inverse covariance matrices based on tuning parameters
$\lambda_1$ and $\lambda_2$,
 and $E_k$ is the number of non-zero elements in $\hat{\bTheta}^{(k)}_{\lambda_1,\lambda_2}$.
A grid search can then be performed to select $\lambda_1$ and $\lambda_2$ minimizing the $AIC(\lambda_1,\lambda_2)$
   score.
The simulation study in Section 7 suggests that this criterion tends
to select models whose Kullback-Leibler (dKL) divergence from the
true model is low. When the number of variables $p$ is very large,
computing $AIC(\lambda_1,\lambda_2)$ over a range of values for
$\lambda_1$ and $\lambda_2$ may prove computationally onerous. If
this is the case, we suggest a dense search over
$\lambda_1$ followed by a quick search over $\lambda_2$.

It is worth noting that in most cases, network estimation is performed as a part of exploratory data analysis and hypothesis generation.
For these purposes, approaches such as AIC, BIC, and cross-validation may tend to choose models too large to be useful.
In this setting, model selection should be guided by practical considerations, such as network  interpretability, stability, and
 the desire for an edge set with a  low false discovery rate  \citep{StabilitySelection,BINCO}.

\section{Simulation study}
\label{simstudy}

We  compare the performances of FGL and GGL
 to two existing methods, graphical lasso and \citet{Guo2011}'s proposal, in Section \ref{K3sims}.
 When applying the graphical lasso, networks are fitted for
each class separately.
We investigate the effects of $n$ and $p$ on FGL and GGL's
performances
 in Section \ref{three}. Additional simulation results are presented in Appendix 3.

The effects of the FGL and GGL penalties vary with
the sample size.  For ease of presentation of the simulation study results, we multiply the reported tuning parameters $\lambda_1$ and $\lambda_2$ by the sample size of each class before performing JGL.

To ease interpretation, we reparametrize the GGL penalties in our simulation study. 
The motivation is to
summarize the regularization for ``sparsity" and for ``similarity"
separately. In FGL, this is nicely achieved by just using
$\lambda_1$ and $\lambda_2$, as the former drives network sparsity and the
latter drives network similarity. In contrast, in GGL, both tuning
parameters contribute to sparsity: $\lambda_1$ drives individual
network edges to zero whereas $\lambda_2$ simultaneously drives
network edges to zero across all $K$ network estimates. 
We reparameterize our simulation results for GGL in terms of
$\omega_1 = \lambda_1 + \frac{1}{\sqrt{2}}\lambda_2$ and
$\omega_2 = \frac{1}{\sqrt{2}}\lambda_2 / (\lambda_1 +
\frac{1}{\sqrt{2}} \lambda_2)$, which we found to reasonably reflect the levels of 
``sparsity" and ``similarity" regularization, respectively. 

\subsection{Performance as a function of tuning parameters}
\label{K3sims}

\subsubsection{Simulation set-up}
\label{setup} In this simulation, we consider a three-class problem.
We first generate three networks with $p=500$ features belonging to
ten equally sized unconnected subnetworks, each with a power law
degree distribution. Power law degree distributions are thought to
mimic the structure of biological networks \citep{Chen2004} and are
generally harder to estimate than simpler structures \citep{Space}. Of the ten
subnetworks, eight have the same structure and edge values in all
three classes, one is identical between the first two classes and
missing in the third (i.e. the corresponding features are singletons
in the third network), and one is present in only the first class.
The topology of the networks generated is shown in Figure
\ref{net_used_in_sims} in Appendix 4.

Given a network structure, we generate a covariance matrix for the first class as follows \citep{Space}.
We create a $p \times p$ matrix with ones on the diagonal, zeroes on elements not corresponding to network edges,
 and values from a uniform distribution with support on $\{[-.4,-.1]\cup[.1,.4]\}$ on elements corresponding to edges.
To ensure positive definiteness, we divide each off-diagonal element by 1.5 times the sum of the absolute values of off-diagonal elements in its row.
Finally, we average the matrix with its transpose, achieving a symmetric, positive-definite matrix $\bf A$.
We then create the $(i,j)$ element of ${\bf \Sigma}_1$ as 
\[
 d_{ij}  ({\bf A}^{-1})_{ij} / \sqrt{({\bf A}^{-1})_{ii}  ({\bf A}^{-1})_{jj} },
\]
where $d_{ij} = 0.6$ if $i\neq j$ and $d_{ij}=1$ if $i=j$. We create
${\bf \Sigma}_2$ equal to ${\bf \Sigma}_1$, then reset one
of its ten subnetwork blocks to the identity. We create ${\bf
\Sigma}_3$ equal to ${\bf \Sigma}_2$, and reset an
additional subnetwork block to the identity. Finally, for each class
we generate independent, identically distributed samples from a
$N({\bf 0},{\bf \Sigma}_k)$ distribution.

We present two additional simulations studies involving  two-class datasets in Appendix 3.
The first additional simulation uses the same network structure described above, and the second uses a single power law network with no block structure.

\subsubsection{Simulation results}
\label{one} Our first set of simulations illustrates the effect of
varying tuning parameters on the performances of FGL and GGL. We
generated 100 three-class data sets with $p=500$ features and $n=150$ observations
per class, as described in Section \ref{setup}.
Class 1's network had 490 edges, class 2's network is missing 49 of those edges, and  class 3's network is missing an
additional 49 edges.
Figure \ref{Sixplots} shows the results, averaged over the 100 data sets.
In each plot, the lines for FGL and for GGL indicate the results obtained with a single value of the similarity tuning parameters $\lambda_2$ and $\omega_2$.
The graphical lasso and the proposal of \citet{Guo2011} are included in the comparisons.

Figure \ref{Sixplots}(a) displays the number of true edges selected against
the number of false edges selected.
As the sparsity tuning parameters
$\lambda_1$ and $\omega_1$ decrease, the number of edges selected increases.
At many values of the similarity tuning
parameter $\lambda_2$, FGL dominates the other methods. At some
choices of the similarity tuning parameter $\omega_2$, GGL performs
as well as \citet{Guo2011}. FGL, GGL, and
\citet{Guo2011}'s proposal dominate the graphical lasso.

Figure \ref{Sixplots}(b) displays the sum of  squared errors (SSE)
between estimated edge values and true edge values: $\sum_{k=1}^K
\sum_{i\neq j}(\hat{\theta}^{(k)}_{ij} - (\siginv_{k})_{ij})^2$.
Unlike the proposal of \citet{Guo2011}, FGL, GGL, and the graphical
lasso tend to overshrink edge values towards zero due to the use of convex penalty functions.
Thus, while FGL and GGL attain SSE values that are as low as those
of \citet{Guo2011}, they do so when estimating much larger networks.
When simultaneous  edge selection and estimation
are desired,  it may be useful to run FGL or GGL once and then to re-run them 
with smaller penalties on the selected edges, as  in
\citet{Relaxo07}.

Figure \ref{Sixplots}(c) evaluates each method's success in detecting \emph{differential edges}, or  edges that differ between classes.
For FGL, the number of differential edges is computed as the number of pairs $k<k', i<j$ such that  $\hat\theta_{ij}^{(k)} \neq \hat\theta_{ij}^{(k')}$.
Since GGL, the proposal of \citet{Guo2011}, and the graphical lasso cannot yield edges that are exactly identical across classes, for those approaches the number of differential edges is computed as the number of pairs $k<k', i<j$ such that  $|\hat\theta_{ij}^{(k)}-\hat\theta_{ij}^{(k')}|
> 10^{-2}$.
The number of true positive differential edges is plotted against the number of false positive differential edges.
Note that by controlling the total number of non-zero edges, the sparsity tuning parameters
$\lambda_1$ and $\omega_1$ have a large effect on the number of
edges that are estimated to differ between the two networks. FGL yields fewer false positives than the competing
methods, since it shrinks
between-class differences in edge values to zero. Since neither GGL nor \citet{Guo2011}'s method are designed to shrink
edge values towards each other,
by this measure neither method outperforms even the graphical lasso.

Figure \ref{Sixplots}(d) displays the sum of the dKL's of the estimated distributions from the true distributions, as a function of the $\ell_1$ norm of the off-diagonal elements of the estimated precision matrices, i.e. $\sum_k \sum_{i \neq j} |\hat\theta^{(k)}_{ij}|$.
The dKL from the multivariate normal model with inverse covariance estimates $\bThetahat^{(1)}, \ldots, \bThetahat^{(K)}$ to the multivariate normal model with the true precision matrices $\siginv_1,\ldots,\siginv_K$ is
\[
\frac{1}{2} \sum_{k=1}^K \left( -\log \det(\bThetahat^{(k)}{\bf \Sigma}_k)
+ \trace(\bThetahat^{(k)} {\bf \Sigma}_k )  \right).
\]
At most values of $\lambda_2$, FGL attains a lower dKL than the other methods, followed by \citet{Guo2011}'s method,
 then by GGL.
The graphical lasso has the worst performance, since it estimates
each network separately. 

Figure \ref{Sixplots}(e) compares the methods' running times.
Computation time (in seconds) is plotted against the total number of non-zero edges estimated.  The graphical lasso is fastest, but FGL and GGL are much faster than the proposal of
\citet{Guo2011}, due to the results from Section \ref{screen}. 
Timing comparisons were performed on an Intel Xeon x5680 3.3 GHz processor.
It is worth mentioning that the FGL algorithm is much faster in problems with only two classes,
since in that case there is a closed-form solution to the generalized fused lasso problem (Section \ref{FGLGGLsol}).  This can be seen in Figures \ref{Sixplots.K2}(e) and \ref{Sixplots.onebignet}(e) in Appendix 3.

We examined the FGL and GGL models with tuning parameters selected as described in Section \ref{tuning_parameter_selection}.  
For FGL, AIC selected the tuning parameters $\lambda_1 = 0.175$, $\lambda_2 = 0.025$.
Over the 100 replicate datasets, the  FGL models with these tuning parameters averaged a dKL of 774, 
 884 true positive edges, 2406 false positive edges, 77  true positive differential edges, and 4977 false positive differential edges.  
In GGL, AIC selected the tuning parameters $\omega_1 = 0.225$, $\omega_2 = 1$.
Over the 100 replicate datasets, the  GGL models with these tuning parameters averaged a dKL of 776, 
 898 true positive edges, 736 false positive edges, 53 true positive differential edges, and  1456 false positive differential edges.

\begin{figure}[htp]
\centering
\includegraphics[width=0.3\linewidth]{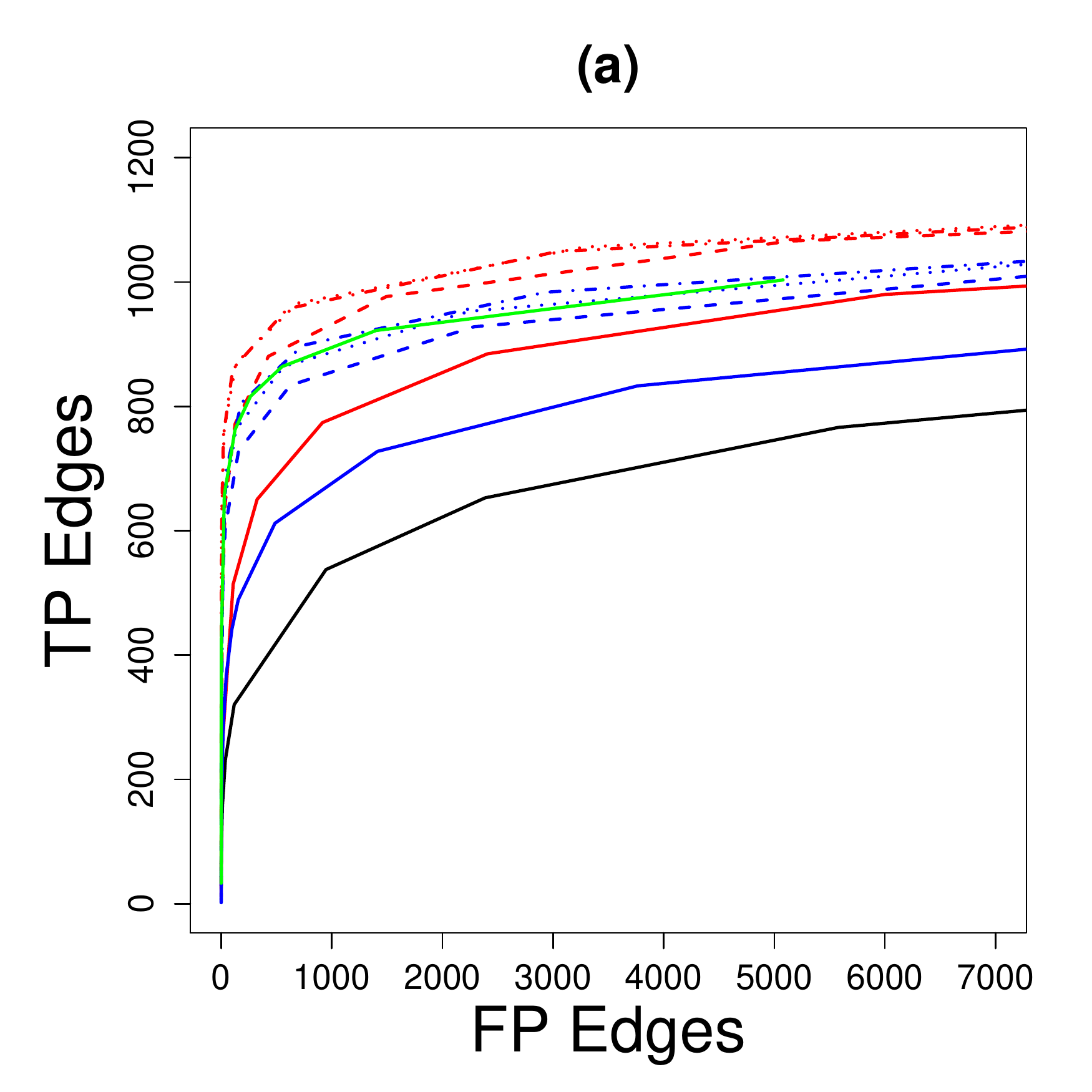}
\includegraphics[width=0.3\linewidth]{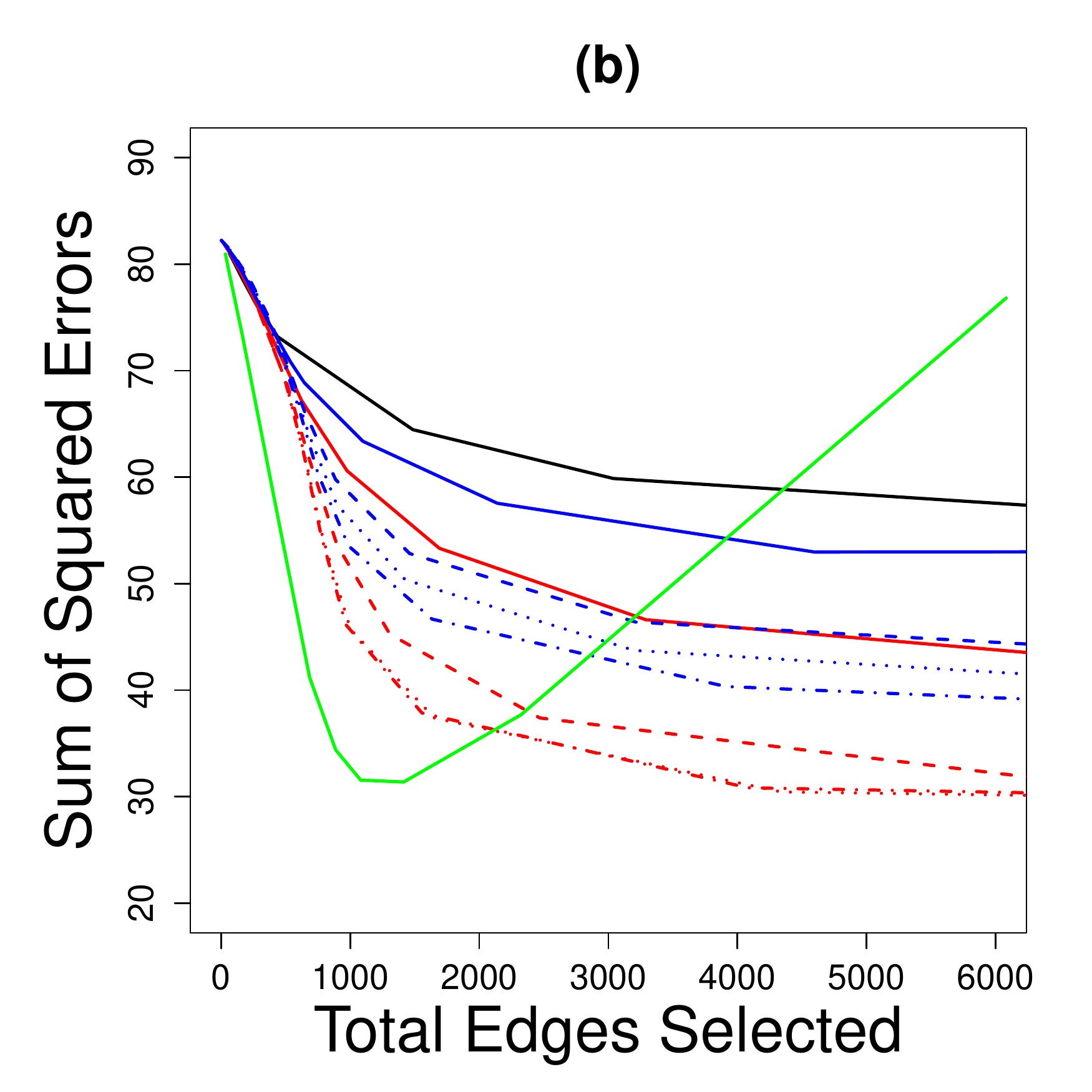}
\includegraphics[width=0.3\linewidth]{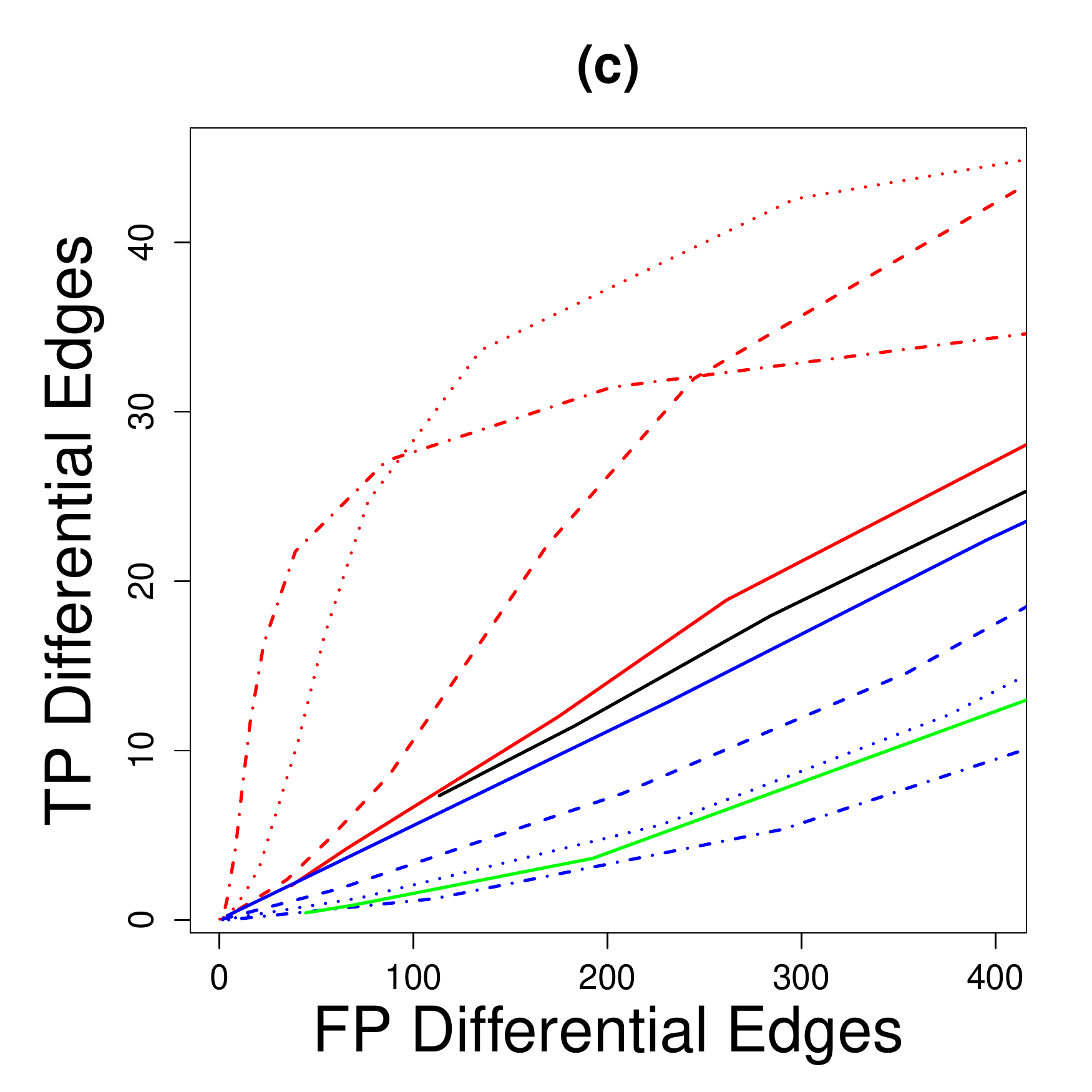}
\includegraphics[width=0.3\linewidth]{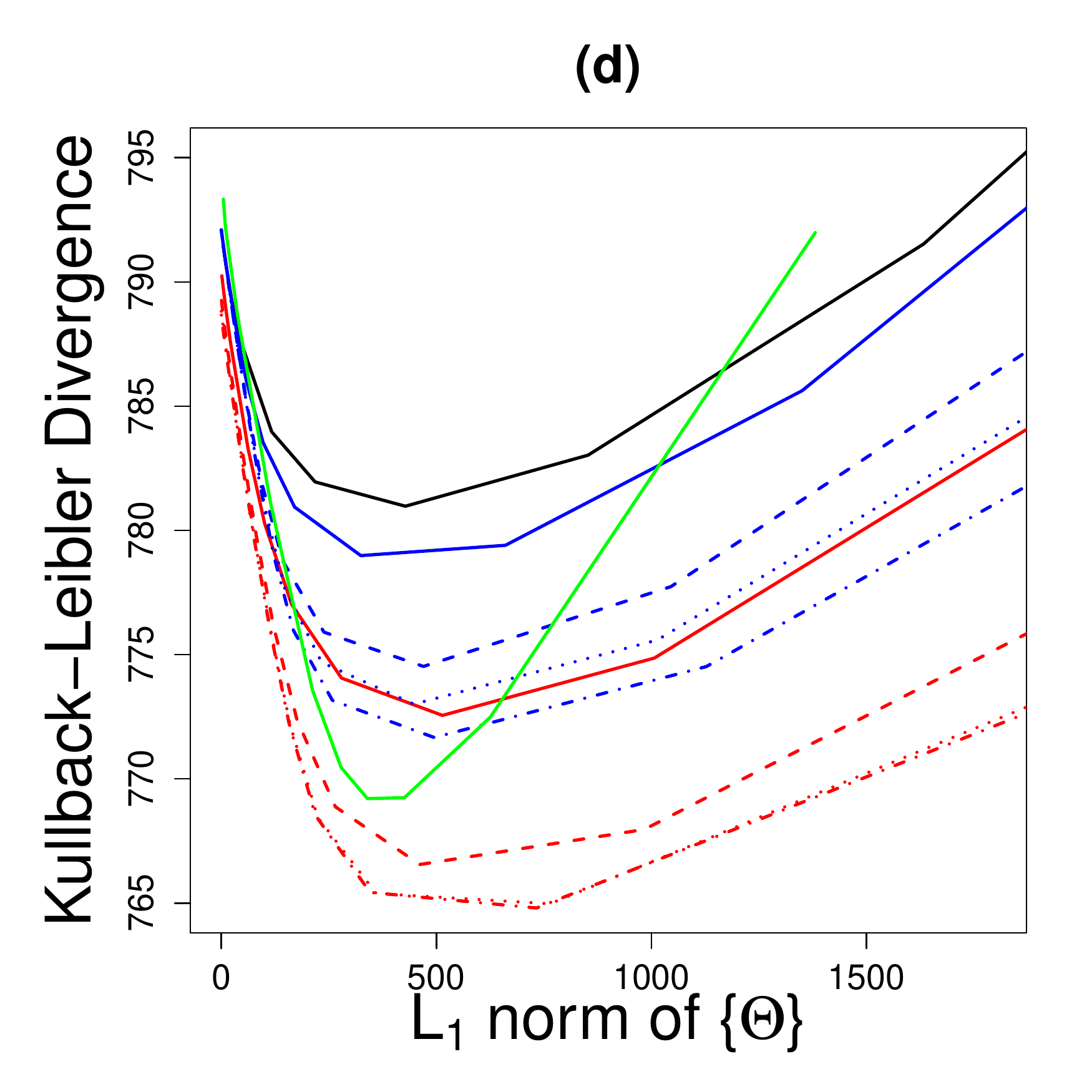}
\includegraphics[width=0.3\linewidth]{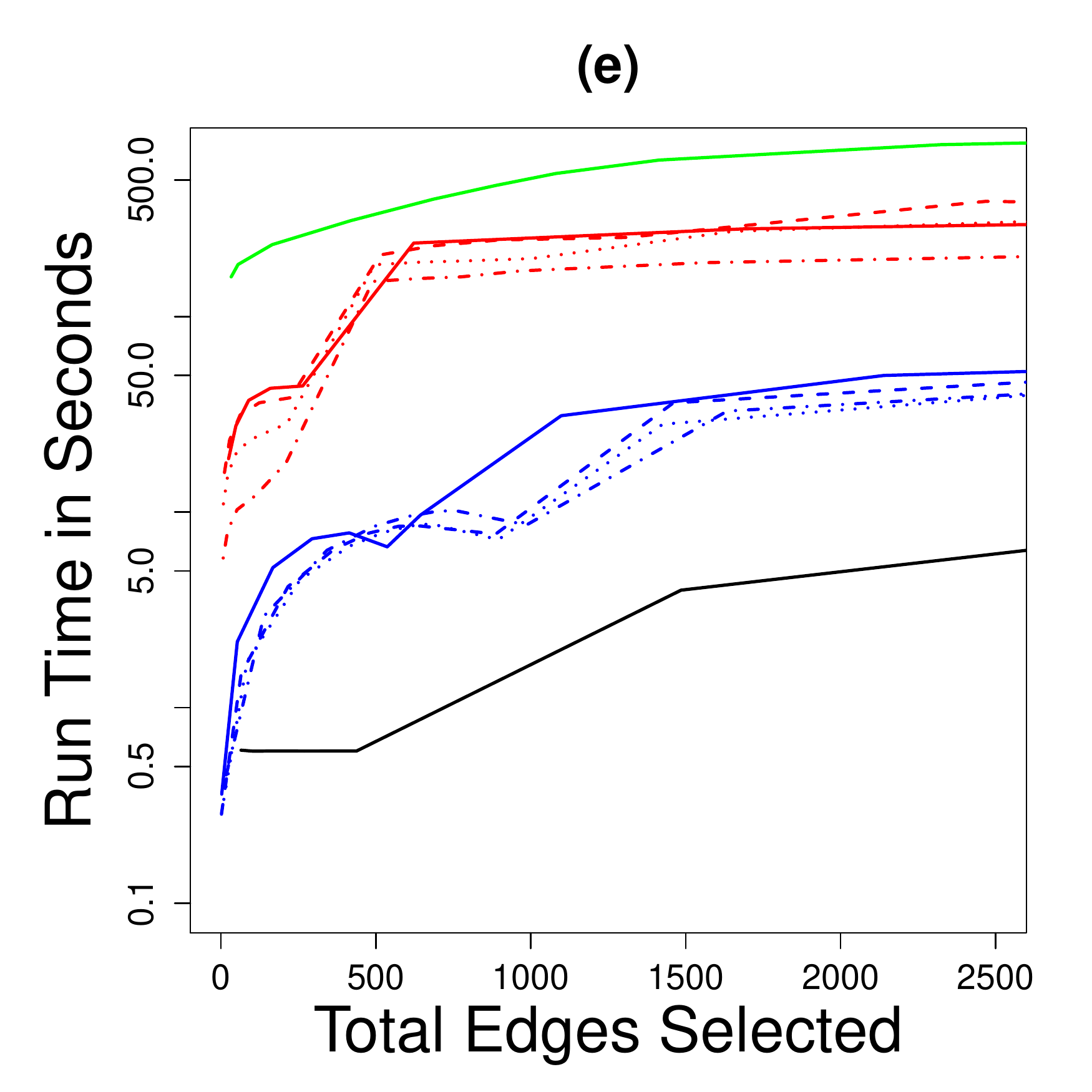}   
\includegraphics[width=0.3\linewidth]{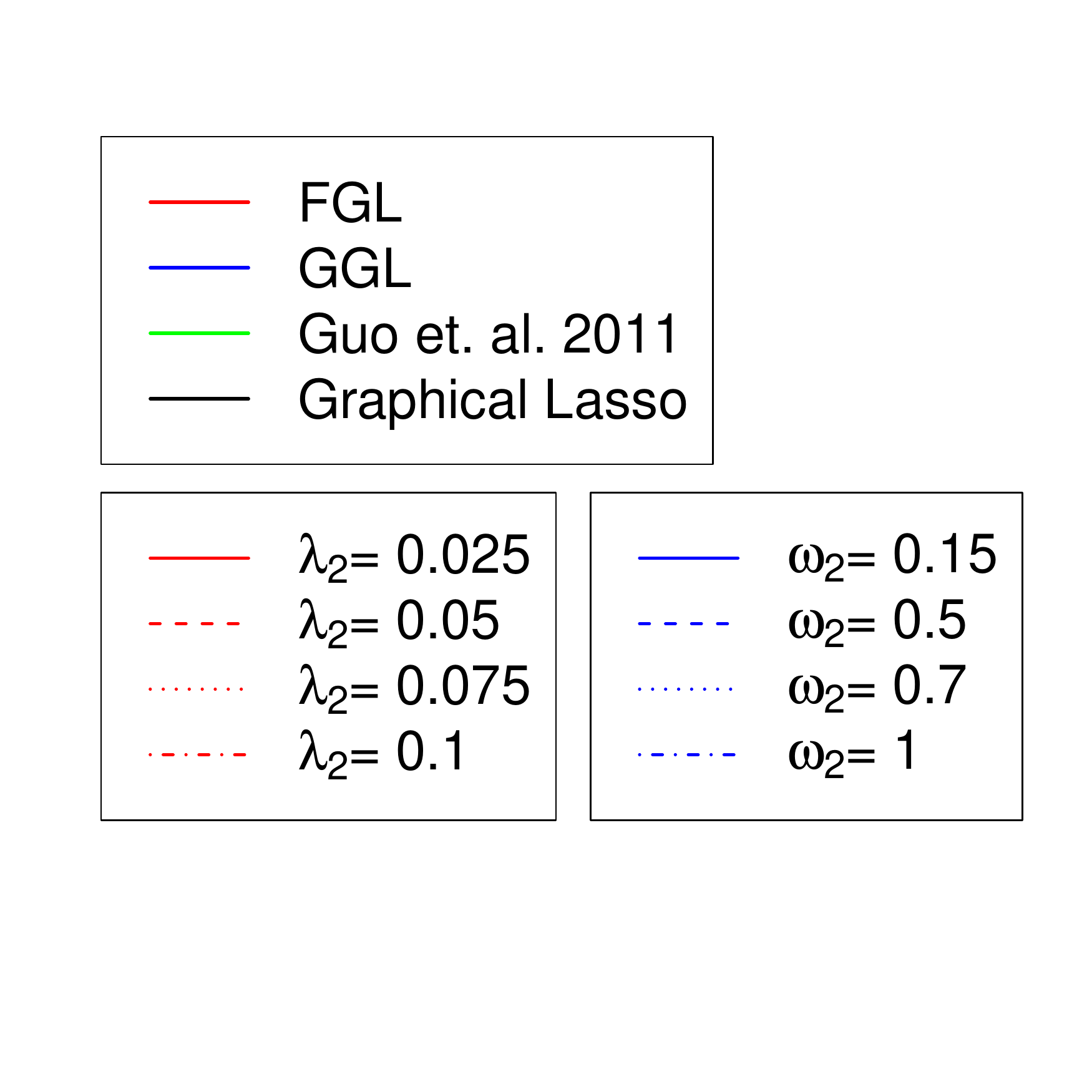}

\caption{ \label{Sixplots}
\it Performance of FGL, GGL, \protect\citet{Guo2011}'s method, and the graphical lasso on simulated data with $150$ observations in each of 3 classes, and $500$ features.
Black lines display models derived using the graphical lasso, green lines display the proposal of \citet{Guo2011}, red lines display FGL, and blue lines display GGL.  
 {\bf (a)}: The number of edges correctly identified to be nonzero (TP Edges) is plotted against the number of  edges incorrectly identified to be nonzero (FP edges).
 {\bf (b)}: The sum of squared errors in edge values is plotted against the total number of edges estimated to be nonzero.
 {\bf (c)}: The number of  edges correctly found to have values differing between classes (TP Differential Edges) is plotted against the number of edges incorrectly found
 to have values differing between classes (FP Differential Edges).
 {\bf (d)}: The dKL of the estimated models from the true models is plotted against the $\ell_1$ norm of the off-diagonal entries of the estimated precision matrices.
 {\bf (e)}:  Running time (in seconds) is plotted against the number of non-zero edges estimated. Note the use of a log scale on the $y$-axis.}
\end{figure}

\subsection{Performance as a function of $n$ and $p$}
\label{three}
We now evaluate the effect of sample size $n$ and
dimension $p$ on the performances of FGL and GGL.

\subsubsection{Simulation set-up}

We generate a pair of networks with $p=500$ much as described in Section \ref{setup}, but with $K=2$ instead of $K=3$.
The first network has 10 equal-sized components with power law degree distributions,
and the second network is identical to the first in both edge identity and value,
but with two components removed.

In addition to the 500-feature network
pair, we generate a pair of networks with $p=1000$ features, each of which is block diagonal with  
 $500 \times 500$ blocks
corresponding to two copies of the 500-feature networks just described.
We generate covariance matrices from the networks exactly as described in Section \ref{setup}.

\subsubsection{Simulation results}
For both the $p=500$ and the $p=1000$ networks, we simulate 100
datasets with $n=50$, $n=200$, and $n=500$ samples in each class. We
run FGL with $\lambda_1 = 0.2, \lambda_2 = 0.1$ and GGL with
$\lambda_1 =0.05, \lambda_2 = 0.25$.
We record in Table \ref{tab:NandP} dKL as well as the sensitivity and false discovery rate
associated with detecting non-zero edges and detecting differential edges.
In this simulation setting, accuracy of covariance estimation (as
measured by dKL) improves significantly from $n = 50$ to $n = 200$,
and improves only marginally with a further increase to $n = 500$.
Detection of edges  improves throughout the range of
$n$'s sampled: for both FGL and GGL, sensitivity improves slightly
with increased sample size, and FDR decreases dramatically.
Detection of edge differences  is a more difficult problem, for
which FGL performs well.

\begin{table}
\caption{\label{tab:NandP} \it Performances as a function of $n$ and $p$.
Means over 100 replicates  are shown for dKL,
 and for sensitivity (Sens.) and false discovery rate (FDR) of detection of edges (DE) and differential edge detection (DED).   }
\centering
\small
\fbox{
\begin{tabular}{c|c|c|c|c|c|c|c}
& $p$ & $n$ & dKL & DE Sens. & DE FDR &DED Sens. & DED FDR\\
\hline
\multirow{6}{*}{FGL} &  & 50 & 545.1 & 0.502 & 0.966 & 0.262 & 0.996 \\
& 500 & 200 & 517.5 & 0.570  & 0.053 & 0.228  & 0.485  \\
&  & 500 & 516.6 & 0.590 & 0.001 & 0.192 & 0.036  \\
\cline{2-8}
&  & 50 & 1119.3 & 0.600 & 0.970 & 0.245 & 0.998 \\
& 1000 & 200 & 1035.0 & 0.666 & 0.063 & 0.223 & 0.557 \\
&  & 500 & 1033.3 & 0.681 & 0.000 & 0.194 & 0.025 \\
\hline
\multirow{6}{*}{GGL} &  & 50 & 549.8 & 0.490 & 0.973 & 0.337 & 0.996  \\
& 500 & 200 & 520.8 & 0.505 & 0.060  & 0.244 & 0.903 \\
&  & 500 & 519.7 & 0.524 & 0.010  & 0.194 & 0.921  \\
\cline{2-8}
&  & 50 & 1127.9 & 0.587 & 0.976 & 0.316 & 0.998 \\
& 1000 & 200 & 1041.7 & 0.615 & 0.061 & 0.239 & 0.908   \\
&  & 500 & 1039.4  & 0.629 & 0.007 & 0.197 & 0.920 \\
\end{tabular}
}
\normalsize
\end{table}

\section{Analysis of lung cancer microarray data}
\label{realstudy}
We applied FGL to a dataset   containing $22,283$ microarray-derived
gene expression measurements from large airway epithelial cells sampled from 97 patients with lung cancer and 90 controls \citep{Spira2007}.
The data are publicly available from the Gene Expression Omnibus \citep{GEO} at accession number GDS2771.
We omitted genes with standard deviations in the bottom $20\%$ since a greater share of their variance is likely attributable to non-biological noise.
The remaining genes were normalized to have mean zero and standard deviation one within each class.
To avoid disparate levels of sparsity between the classes and to prevent the larger class from dominating the estimated networks, we
 weighted each class equally instead of by sample size in  (\ref{loglikjointproposal}).
Since our goal was data visualization and hypothesis generation, we chose a high value for the sparsity tuning parameter, $\lambda_1=0.95$, to yield very sparse network estimates.
We ran FGL with a range of  $\lambda_2$ values in order to identify the edges that differed most strongly, and settled on $\lambda_2=0.005$ as providing the most interpretable results.
Application of Theorem \ref{thm2} revealed that only 278 genes were connected to any other gene
using the chosen tuning parameters.
 Identification of block diagonal structure using Theorem \ref{thm2} and application of the FGL algorithm took less than two minutes.
(Note that this data set is so large that it would be computationally prohibitive to apply the proposal of \citet{Guo2011}!)
FGL estimated 134 edges shared between the two networks, 202 edges present only in the cancer network, and 18 edges present only in the normal tissue network.
The results are displayed in Figure \ref{fig:4}.

\begin{figure}[htp]
\centering
\includegraphics[width=0.4\linewidth]{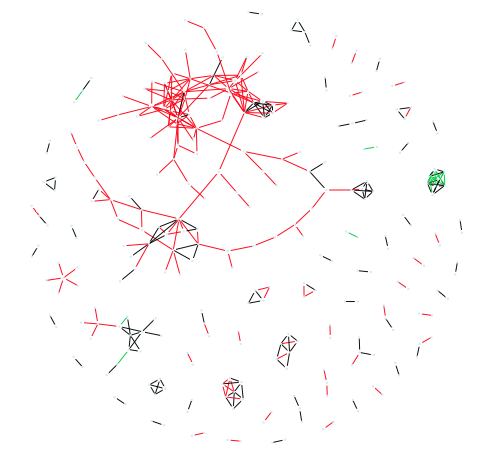}
 \caption{\it{ \label{fig:4}} \it Conditional dependency networks inferred from 17,772 genes in normal and cancerous lung cells. 278 genes have nonzero edges
in at least one of the two networks.
 Black lines denote edges common to both classes. Red and green lines denote tumor-specific and normal-specific edges, respectively. }
\end{figure}

The estimated networks contain many two-gene subnetworks common to both classes, a few small subnetworks, and one large subnetwork specific to tumor cells.
Reassuringly, 45\% of edges, including almost all of the two-gene subnetworks, connect multiple probes for the same gene.
Many other edges connect genes that are obviously related, involved in the same biological process, or even coding for components of the same enzyme.
 Examples include TUBA1B and TUBA1C, PABPC1 and PABPC3, HLA-B and HLA-G, and SERPINB3 and SERPINB4.
Recovery of these pairs suggests that FGL (and other network analysis tools) can generate high-quality hypotheses about gene co-regulation and functional interactions.
This increases our confidence that some of the non-obvious two-gene subnetworks detected in this analysis may merit further investigation. Examples include
 DAZAP2 and TCP1, PRKAR1A and CALM3, and BCLAF1 and SERPB1.
A complete list of subnetworks detected is available in the Supplementary Materials.

The small black and green network in Figure \ref{fig:4} suggests an interesting phenomenon.  It contains multiple probes for two hemoglobin genes, HBA2 and HBB.
In the normal tissue network, the probes for these genes are heavily interconnected both within and between the genes.
In the tumor cells, while edges between HBA2 probes and between HBB probes are preserved, no edges connect the two genes.
The abundance of connections between the two genes in healthy cells and the absence of connections in tumor cells may indicate a possible direction of future investigation.

The most promising results of this analysis arise from the large subnetwork (104 nodes for 84 unique genes) unique to tumor cells.
Many of the subnetwork's genes are involved in constructing ribosomes, including RPS8, RPS23, RPS24, RPS7p11, RPL3, RPL5, RPL10A, RPL14P1, RPL15, RPL17, RPL30 and RPL31.
Other genes in the subnetwork further involve ribosome functioning: SRP14 and SRP9L1 are involved in recruiting proteins from ribosomes into the ER, and NACA inhibits the SRP pathway.
Thus this subnetwork portrays a detailed web of relationships consistent with known biology.
More interestingly, this network also contains two genes in the RAS oncogene family: RAB1A and RAB11A.
Genes in this family have been linked to many types of cancer, and are considered promising targets for therapeutics \citep{Adjei2008}.
These genes' connections with ribosome activity in the tumor samples may indicate a relationship common to an important subset of cancers.
Many other genes belong to this network, each indicating a potentially novel interaction in cancer biology.

\section{Discussion}
\label{discussion}

We have introduced the joint graphical lasso, a method for estimating
sparse inverse covariance matrices on the basis of observations drawn from distinct but related classes.
We employ an ADMM algorithm to solve the joint graphical lasso problem with any convex penalty function,
 and we provide explicit and efficient solutions for two useful penalty functions.
Our algorithm is tractable on very large datasets ($>20,000$ features), and usually converges in seconds for smaller problems (500 features).
Our joint estimation methods outperform competing approaches on a range of simulated datasets.

In the JGL optimization problem (\ref{loglikjointproposal}), the contribution of each class
to the penalized log likelihood is weighted by its size; consequently, the largest class can have  outsize influence on the estimated networks.
By omitting the $n_k$ term in (\ref{loglikjointproposal}), it is  possible to weight the classes equally to prevent a single class from dominating estimation.

We note that FGL and GGL's reliance on two tuning
parameters is a strength rather than a drawback: unlike
the proposal of \citet{Guo2011}, which involves a single tuning parameter that
controls both sparsity and similarity, in performing FGL and GGL one
can vary separately the amount of similarity and sparsity to enforce in the
network estimates.

The joint graphical lasso has potential applications beyond those
discussed in this paper. For instance,  one could use it to shrink
multiple classes' precision matrices towards each other in order
 to define a classifier intermediate between quadratic discriminant analysis (QDA) and linear discriminant analysis (LDA)
\citep{ElemStatLearn}. In fact, a similar approach has been taken in recent work \citep{SimonTibs12}.
In the unsupervised setting, it can be used in the  maximization step of Gaussian model-based clustering in order to reduce the variance associated with estimating a separate covariance matrix
for each cluster.

An R package implementing FGL and GGL will be made available on \verb=CRAN=,\\
 \verb=http://cran.r-project.org/=.

\section*{Acknowledgments}
We thank two anonymous reviewers, an associate editor, and an editor for useful comments that substantially improved this paper;
Noah Simon, Holger Hoefling, Jacob Bien, and Ryan Tibshirani for helpful conversations
 and for sharing with us unpublished results; and Jian Guo and Ji Zhu for
providing software for the proposal in \citet{Guo2011}. We thank Karthik Mohan, Mike Chung, Su-In Lee,  Maryam Fazel, and Seungyeop Han for helpful comments. PD and
PW are supported by NIH grant 1R01GM082802. PW is also supported by
NIH grants P01CA53996 and U24CA086368. DW is supported by NIH grant DP5OD009145.

\small

\bibliographystyle{agsm}
\bibliography{biblio}

\normalsize

\section*{Appendix 1: Modifying JGL to work on the scale of partial correlations}
\label{parcor_extension}

A reviewer suggested that under some circumstances, it may be preferable to encourage the $K$ networks to have shared 
partial correlations, rather than shared precision matrices.
Below, we describe a simple approach for extending our FGL proposal to work on the scale of partial correlations. 
A similar approach can be taken to extend GGL. The extension relies on two insights.
\begin{enumerate}
\item $\rho_{ij} = -\frac{\sigma^{ij}}{\sqrt{\sigma^{ii}\sigma^{jj}}}$,
where $\rho_{ij}$ is the true partial correlation between the $i$th
and $j$th features, and where $\sigma^{ij}$ is the $(i,j)$th entry
of the true precision matrix.
\item The algorithm for solving the FGL optimization problem can easily be modified to make use of the following penalty function:
\begin{equation}
P(\sTheta) =  \sum_{k=1}^K \sum_{i \neq j}  \lambda_{1,ij}
|\theta^{(k)}_{ij}| +   \sum_{k < k'} \sum_{i, j}\lambda_{2,ij}
|\theta^{(k)}_{ij}-\theta^{(k')}_{ij}|,
\end{equation}
\end{enumerate}
where $\lambda_{t, ij}=\lambda_t/\sqrt{\hat\sigma^{ii}\hat\sigma^{jj}}$, $t=1,\ 2$, and
where $\hat\sigma^{ii}$ is an estimate of the $i$th diagonal element of the $K$ precision matrices.  
(Here, we assume the $K$ precision matrices have shared diagonal elements.)
 The estimate $\{\hat\sigma^{ii}\}$ can be obtained in a number of ways, for instance
by performing the graphical lasso on the samples from all $K$ data sets together.
Then this approach will effectively result in applying a generalized fused lasso penalty to 
the partial correlations for the $K$ classes.\\


\section*{Appendix 2: Proofs of Theorems \ref{thm2} and \ref{thm3}}
\subsection*{Preliminaries to Proofs of Theorems \ref{thm2} and \ref{thm3}}

\noindent We begin with  a few comments on subgradients.
The subgradient of $|\theta_{ij}^{(k)}|$ with respect to $\theta_{ij}^{(k)}$ equals
\[
\left\{ \begin{array}{lll}
1 & \mbox{if $\theta_{ij}^{(k)}>0$} \\
-1 & \mbox{if $\theta_{ij}^{(k)}<0$} \\
a  & \mbox{if $\theta_{ij}^{(k)}=0$} \\
\end{array},
\right.
\]
for some $a \in [-1,1]$.
The subgradient of $ |\theta_{ij}^{(k)} - \theta_{ij}^{(k')}|$ with respect to $(\theta_{ij}^{(k)}, \theta_{ij}^{(k')})$ for $k \neq k'$ equals $(d, -d)$, where
\[
d = \left\{ \begin{array}{lll}
1 & \mbox{if $\theta_{ij}^{(k)} > \theta_{ij}^{(k')}$} \\
-1 & \mbox{if $\theta_{ij}^{(k)} < \theta_{ij}^{(k')}$} \\
a  & \mbox{if $\theta_{ij}^{(k)} = \theta_{ij}^{(k')}$} \\
\end{array},
\right.
\]
for some $a \in [-1,1]$.
Finally, the subgradient of $\sqrt{\sum_{k=1}^K (\theta_{ij}^{(k)})^2}$ with respect to $(\theta_{ij}^{(1)},\ldots, \theta_{ij}^{(K)})$  is given by
\[
\left\{ \begin{array}{ll}
 (\theta_{ij}^{(1)}, \ldots,  \theta_{ij}^{(K)}) / \sum_{k=1}^K (\theta_{ij}^{(k)})^2 & \mbox{if $\sum_{k=1}^K (\theta_{ij}^{(k)})^2  > 0$} \\
(\Upsilon_{1,ij}, \ldots, \Upsilon_{K,ij}) & \mbox{if $\theta_{ij}^{(1)} = \ldots = \theta_{ij}^{(K)} = 0$} \\
\end{array},
\right.
\]
for some $\Upsilon_{1,ij},\ldots,\Upsilon_{K,ij}$ such that  $\sum_{k=1}^K \Upsilon_{k,ij}^2 \leq 1$.


To prove Theorem \ref{thm2}, we will use the following lemma.

\begin{lemma}
\label{lemma1}
The following two sets of conditions are equivalent:
\begin{list}{}{}
\item{(A):} $|n_1 S_1| \leq \lambda_1 + \lambda_2$,
 $|n_2 S_2| \leq \lambda_1 + \lambda_2$, and
 $|n_1 S_1 + n_2 S_2| \leq 2\lambda_1$.
\item{(B):} There exist $\Gamma_1, \Gamma_2, \Upsilon \in [-1,1]$ such that
$-n_1 S_1 - \lambda_1 \Gamma_{1} - \lambda_2 \Upsilon = 0$, and  $-n_2 S_2- \lambda_1 \Gamma_{2} + \lambda_2 \Upsilon = 0$.
\end{list}
\end{lemma}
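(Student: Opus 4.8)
The plan is to turn (B) into a feasibility question for a single scalar, and then invoke the one-dimensional Helly property to split that feasibility into the three scalar inequalities of (A). Write $a = n_1 S_1$ and $b = n_2 S_2$, and introduce $t = \lambda_2 \Upsilon$. As $\Upsilon$ ranges over $[-1,1]$ the quantity $t$ ranges over the closed interval $[-\lambda_2, \lambda_2]$, and conversely every such $t$ comes from an admissible $\Upsilon$ (taking $\Upsilon = t/\lambda_2$ when $\lambda_2 > 0$, and $\Upsilon = 0$ in the degenerate case $\lambda_2 = 0$, where this interval is $\{0\}$). With this substitution the two equations in (B) become $\lambda_1 \Gamma_1 = -a - t$ and $\lambda_1 \Gamma_2 = t - b$, and a value $\Gamma_1 \in [-1,1]$ solving the first exists iff $|a + t| \le \lambda_1$, while a value $\Gamma_2 \in [-1,1]$ solving the second exists iff $|t - b| \le \lambda_1$. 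Hence (B) is equivalent to the existence of $t \in \mathbb{R}$ lying in all three closed intervals $I_1 = [-a-\lambda_1,\, -a+\lambda_1]$, $I_2 = [b-\lambda_1,\, b+\lambda_1]$, and $I_3 = [-\lambda_2,\, \lambda_2]$.

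Next I would use the elementary fact that a finite family of closed real intervals has a common point iff every pair of them does (equivalently, $\max_i \ell_i \le \min_i r_i$, where $\ell_i, r_i$ are the left and right endpoints). So it remains to verify that the three pairwise nonemptiness conditions for $I_1, I_2, I_3$ are precisely conditions 1--3 of (A). A direct endpoint comparison gives $I_1 \cap I_3 \ne \emptyset$ iff $|a| \le \lambda_1 + \lambda_2$; $I_2 \cap I_3 \ne \emptyset$ iff $|b| \le \lambda_1 + \lambda_2$; and $I_1 \cap I_2 \ne \emptyset$ iff $|a+b| \le 2\lambda_1$. In each pair, the two remaining endpoint inequalities collapse to $-\lambda_1 \le \lambda_1$ or $-\lambda_2 \le \lambda_2$, which hold because the tuning parameters are nonnegative. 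Substituting back $a = n_1 S_1$, $b = n_2 S_2$ recovers (A), establishing the equivalence in both directions at once.

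I do not expect a genuine obstacle here; the argument is a short convex-feasibility computation. The only places needing care are (i) confirming that the substitution $t = \lambda_2 \Upsilon$ is reversible, including the $\lambda_2 = 0$ boundary case, and (ii) keeping the signs straight in the endpoint bookkeeping for the three pairwise intersections and checking that the ``trivial'' endpoint inequalities really do follow from $\lambda_1, \lambda_2 \ge 0$. The essential point is that the one-dimensional Helly step replaces what would otherwise be an awkward elimination over $(\Gamma_1, \Gamma_2, \Upsilon)$ with three clean scalar inequalities.
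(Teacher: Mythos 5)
Your proof is correct, but it takes a genuinely different route from the one in the paper. You eliminate $\Gamma_1$ and $\Gamma_2$ first, reducing (B) to the feasibility of a single scalar $t=\lambda_2\Upsilon$ lying in the three closed intervals $[-a-\lambda_1,-a+\lambda_1]$, $[b-\lambda_1,b+\lambda_1]$, and $[-\lambda_2,\lambda_2]$, and then invoke the one-dimensional Helly property to convert that to the three pairwise-intersection conditions, which are exactly conditions 1--3 of (A). This handles both implications simultaneously and explains \emph{why} (A) consists of precisely three inequalities: they are the three pairwise endpoint comparisons. The paper instead proves the two directions separately: (B)\,$\Rightarrow$\,(A) by reading the bounds directly off the two equations and their sum, and (A)\,$\Rightarrow$\,(B) by an explicit two-case construction (according to whether $n_1S_1-n_2S_2<2\lambda_2$ or not, after assuming without loss of generality that $n_1S_1\ge n_2S_2$), exhibiting concrete values of $\Gamma_1,\Gamma_2,\Upsilon$ and verifying they lie in $[-1,1]$. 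Your argument is shorter and avoids the case split; the paper's construction has the advantage of producing explicit subgradients, which is in keeping with how the lemma is used in the KKT analysis of Theorem 4.1. One small point in your favor: you explicitly treat the degenerate case $\lambda_2=0$, whereas the paper's Case~1 formulas divide by $\lambda_2$ (and both constructions divide by $\lambda_1$), so the paper implicitly assumes strictly positive tuning parameters. Your endpoint bookkeeping checks out: $I_1\cap I_3\neq\emptyset$ iff $|a|\le\lambda_1+\lambda_2$, $I_2\cap I_3\neq\emptyset$ iff $|b|\le\lambda_1+\lambda_2$, and $I_1\cap I_2\neq\emptyset$ iff $|a+b|\le 2\lambda_1$, with the intervals themselves nonempty because $\lambda_1,\lambda_2\ge 0$.
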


\begin{proof}
We will begin by proving that $(B)$ implies $(A)$, and will then prove that $(A)$ implies $(B)$.

\noindent \emph{Proof that {(B)} $\Rightarrow$ {(A)}:}  \\

\noindent First of all, $-n_1S_1 - \lambda_1 \Gamma_1 - \lambda_2 \Upsilon = 0$ implies that $|n_1 S_1| \leq \lambda_1 + \lambda_2$, since $\Gamma_1, \Upsilon \in [-1,1]$.
Similarly, $-n_2 S_2 - \lambda_1 \Gamma_2 + \lambda_2 \Upsilon = 0$ implies that $|n_2 S_2| \leq \lambda_1 + \lambda_2$.
Finally, summing the two equations in $(B)$ reveals that
 $n_1 S_1 + n_2 S_2 = - \lambda_1 (\Gamma_1 + \Gamma_2)$, which implies that
 $|n_1 S_1 + n_2 S_2| \leq 2 \lambda_1$.  \\

\noindent  \emph{Proof that {(A)} $\Rightarrow$ {(B)}:} \\

\noindent Without loss of generality, assume that $n_1 S_1 \geq n_2 S_2$.
We split the proof into two cases.
\begin{enumerate}
\item {\emph{Case 1:}} $n_1 S_1 - n_2 S_2  < 2 \lambda_2$.

Let $\Gamma_1 = \Gamma_2 = \frac{-n_1 S_1 - n_2 S_2}{2 \lambda_1}$, and
    $\Upsilon = \frac{-n_1 S_1 + n_2 S_2 }{2 \lambda_2}$.

First, note that by (A), we know that $|n_1 S_1+n_2 S_2| \leq 2\lambda_1$. Therefore, $\Gamma_1, \Gamma_2 \in [-1,1]$.
Second, note that Case 1's assumption that $n_1 S_1 - n_2 S_2  < 2 \lambda_2$  implies that $\Upsilon \in [-1,1]$.
Finally, we see by inspection that $-n_1 S_1 - \lambda_1 \Gamma_{1} - \lambda_2 \Upsilon = 0$, and  $-n_2 S_2- \lambda_1 \Gamma_{2} + \lambda_2 \Upsilon = 0$.

\item {\emph{Case 2:}} $n_1 S_1 - n_2 S_2 \geq  {2 \lambda_2}$.

Let $\Gamma_1 = \frac{-n_1 S_1+\lambda_2}{\lambda_1}$,
$\Gamma_2 = \frac{-n_2 S_2-\lambda_2}{\lambda_1}$, and $\Upsilon = -1$.
Then, by inspection, $-n_1 S_1 - \lambda_1 \Gamma_{1} - \lambda_2 \Upsilon = 0$, and  $-n_2 S_2- \lambda_1 \Gamma_{2} + \lambda_2 \Upsilon = 0$.

It remains to  show that $\Gamma_1, \Gamma_2, \Upsilon \in [-1,1]$.
Trivially, $\Upsilon = -1 \in [-1,1]$.  From our assumption that $|n_1 S_1| \leq \lambda_1 + \lambda_2$, we know that
 $-1 \leq \Gamma_1$. Moreover, by the assumptions that
${n_1 S_1- n_2 S_2} \geq {2 \lambda_2} $ and $|n_1 S_1 + n_2 S_2| \leq 2\lambda_1$, we have that
\begin{equation}
 \Gamma_1 = \frac{-n_1 S_1 +\lambda_2}{\lambda_1} \leq \frac{-n_1 S_1 +\lambda_2 \left(\frac{n_1 S_1-n_2 S_2}{2 \lambda_2} \right)}{\lambda_1}
=\frac{-n_1 S_1 -n_2 S_2  }{2 \lambda_1} \leq 1.
 \end{equation}
Therefore $\Gamma_1 \in [-1,1]$.

By the assumption that $| n_2 S_2| \leq \lambda_1 + \lambda_2$, we know that
$\Gamma_2 =  \frac{-n_2 S_2 -\lambda_2}{\lambda_1} \leq 1$.
From the assumptions that $n_1 S_1- n_2 S_2 \geq 2 \lambda_2$ and $|n_1 S_1 + n_2 S_2 | \leq 2\lambda_1$, we have that
\begin{equation}
\Gamma_2 = \frac{-n_2 S_2 - \lambda_2}{\lambda_1} \geq \frac{-n_2 S_2 -\lambda_2 \left( \frac{n_1 S_1 - n_2 S_2}{2 \lambda_2} \right)}{\lambda_1} = \frac{-n_1 S_1 - n_2  S_2}{2 \lambda_1} \geq -1.
\end{equation}
Therefore $\Gamma_2 \in [-1,1]$.
\end{enumerate}
Thus we conclude {(A)} $\Rightarrow$ {(B)}, and our proof of Lemma \ref{lemma1} is complete.
\end{proof}

We will make use of the following lemma in order to prove   Theorem \ref{thm3}.
\begin{lemma}
\label{lemma2}
The following two conditions are equivalent:
\begin{list}{}{}
\item{(A):}  There exist scalars $a_1,\ldots,a_K$  such that $\sum_{k=1}^K a_k^2 \leq 1$ and $n_k |S_k | \leq \lambda_1 + \lambda_2 a_k$ for all $k=1,\ldots,K$.
 \item{(B):} There exist scalars $\Gamma_1, \ldots, \Gamma_K \in [-1,1]$ and  $\Upsilon_1, \ldots, \Upsilon_K$ such that $\sum_{k=1}^K \Upsilon_k^2 \leq  1$ and
 $n_k S_k + \lambda_1 \Gamma_k + \lambda_2 \Upsilon_k=0$ for $k=1,\ldots,K$.
\end{list}
\end{lemma}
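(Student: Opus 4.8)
\textbf{Proof proposal for Lemma \ref{lemma2}.}

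The plan is to mimic the structure of the proof of Lemma \ref{lemma1}: establish the two implications separately, with the easy direction being $(B)\Rightarrow(A)$ and the substantive direction being $(A)\Rightarrow(B)$. For $(B)\Rightarrow(A)$, I would simply take $a_k = \Upsilon_k$ if $\Upsilon_k \geq 0$ and $a_k = -\Upsilon_k$ otherwise --- more carefully, set $a_k = |\Upsilon_k|$. From $n_k S_k + \lambda_1\Gamma_k + \lambda_2\Upsilon_k = 0$ and $\Gamma_k \in [-1,1]$ we get $|n_k S_k + \lambda_2\Upsilon_k| \leq \lambda_1$, hence $n_k|S_k| \leq \lambda_1 + \lambda_2|\Upsilon_k| = \lambda_1 + \lambda_2 a_k$; and $\sum_k a_k^2 = \sum_k \Upsilon_k^2 \leq 1$. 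So $(A)$ holds with this choice.

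For the main direction $(A)\Rightarrow(B)$, suppose we are given $a_1,\ldots,a_K$ with $\sum_k a_k^2 \leq 1$ and $n_k|S_k| \leq \lambda_1 + \lambda_2 a_k$ for all $k$. Note first that each $a_k$ may be assumed nonnegative: $\lambda_1 + \lambda_2 a_k \geq n_k|S_k| \geq 0$ forces $a_k \geq -\lambda_1/\lambda_2$, but more usefully, if some $a_k < 0$ we may replace it by $0$ without violating either constraint (the inequality $n_k|S_k| \leq \lambda_1 + \lambda_2 a_k$ only tightens to something still implied, wait --- actually increasing $a_k$ to $0$ \emph{weakens} the constraint $n_k|S_k|\le\lambda_1+\lambda_2 a_k$ if the original $a_k<0$, so that's fine, and it only decreases $\sum a_k^2$). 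So WLOG $a_k \geq 0$. Now define $\Upsilon_k = -\sgn(n_k S_k)\, a_k$ (with $\Upsilon_k = 0$ if $S_k = 0$) and $\Gamma_k = (-n_k S_k - \lambda_2\Upsilon_k)/\lambda_1$. Then the equation $n_k S_k + \lambda_1\Gamma_k + \lambda_2\Upsilon_k = 0$ holds by construction; $\sum_k \Upsilon_k^2 = \sum_k a_k^2 \leq 1$; and it remains to check $\Gamma_k \in [-1,1]$. We have $\lambda_1|\Gamma_k| = |n_k S_k + \lambda_2\Upsilon_k| = |n_k S_k| - \lambda_2 a_k \leq \lambda_1$ (using that $n_k S_k$ and $\lambda_2\Upsilon_k$ have opposite signs when $a_k>0$, and that $n_k|S_k| - \lambda_2 a_k \leq \lambda_1$ from $(A)$; when $a_k = 0$ this is just $|n_k S_k| \leq \lambda_1$). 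Hence $\Gamma_k \in [-1,1]$, giving $(B)$.

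The only real subtlety --- and the step I'd flag as the one needing care --- is the sign bookkeeping in $\lambda_1|\Gamma_k| = |n_k S_k| - \lambda_2 a_k$: one must verify that $n_k S_k$ and $\Upsilon_k$ genuinely have opposite signs so that the absolute value of their $\lambda$-weighted sum equals the difference of absolute values, and handle the degenerate case $S_k = 0$ (and also the trivial case $\lambda_1 = 0$, where one should instead argue directly that $(A)$ forces $n_k|S_k| \leq \lambda_2 a_k$ and take $\Upsilon_k = -n_k S_k/\lambda_2$, $\Gamma_k$ arbitrary in $[-1,1]$, e.g. $0$). Everything else is a direct substitution. Once Lemma \ref{lemma2} is in hand, Theorem \ref{thm3} follows by writing the KKT/subgradient stationarity conditions for the GGL problem restricted to a block-diagonal candidate solution (using the subgradient formulas recorded in the Preliminaries), observing that the cross-block entries $\hat\Theta^{(k)}_{ij}$ with $i\in C_1,j\in C_2$ are all zero, so the relevant stationarity conditions are exactly those of form $(B)$ with $S_k = S^{(k)}_{ij}$, and then translating via the lemma into condition $(A)$, which is precisely $\sum_k (|n_k S^{(k)}_{ij}| - \lambda_1)_+^2 \leq \lambda_2^2$ after optimizing out the slack variables $a_k$ (the optimal choice being $a_k = (|n_k S^{(k)}_{ij}| - \lambda_1)_+/\lambda_2$).
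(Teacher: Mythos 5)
Your $(B)\Rightarrow(A)$ direction is exactly the paper's argument and is fine. The $(A)\Rightarrow(B)$ direction, however, has a genuine gap, and it sits precisely at the step you yourself flagged. You set $\Upsilon_k = -\sgn(n_k S_k)\,a_k$, i.e.\ you spend the \emph{entire} budget $a_k$ on $\Upsilon_k$, and then claim $\lambda_1|\Gamma_k| = |n_k S_k + \lambda_2\Upsilon_k| = |n_k S_k| - \lambda_2 a_k$. But when $x$ and $y$ have opposite signs, $|x+y| = \bigl||x|-|y|\bigr|$, not $|x|-|y|$; if $\lambda_2 a_k > |n_k S_k|$ you get $\lambda_1|\Gamma_k| = \lambda_2 a_k - |n_k S_k|$, which can exceed $\lambda_1$. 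Concretely, take $K=1$, $\lambda_1 = 0.1$, $\lambda_2 = 1$, $a_1 = 1$, $n_1 S_1 = 0.05$: condition $(A)$ holds ($0.05 \le 1.1$), but your construction gives $\Upsilon_1 = -1$ and $\Gamma_1 = (-0.05+1)/0.1 = 9.5 \notin [-1,1]$. So as written the proof fails.

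The fix is to spend only as much of the $\Upsilon$-budget as is actually needed: first saturate the lasso subgradient, and let $\Upsilon_k$ absorb only the excess $(n_k|S_k|-\lambda_1)_+$. This is what the paper does, setting $\Gamma_k = -n_kS_k/\lambda_1$ and $\Upsilon_k=0$ when $|n_kS_k|\le\lambda_1$, and $\Gamma_k = \mp 1$, $\Upsilon_k = (-n_kS_k\pm\lambda_1)/\lambda_2$ when $\pm n_kS_k>\lambda_1$; then $|\Upsilon_k| = (n_k|S_k|-\lambda_1)_+/\lambda_2 \le a_k$ by $(A)$, so $\sum_k\Upsilon_k^2\le\sum_k a_k^2\le 1$. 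Equivalently, you could rescue your own argument by first replacing each $a_k$ with $(n_k|S_k|-\lambda_1)_+/\lambda_2$ (which still satisfies $(A)$ and only shrinks $\sum_k a_k^2$) before defining $\Upsilon_k = -\sgn(n_kS_k)a_k$ --- notably, this is exactly the ``optimal choice of $a_k$'' you mention at the end in connection with Theorem \ref{thm3}, so the missing ingredient was already in your hands; you just did not feed it back into the lemma. Your remarks on the degenerate cases $S_k=0$ and $\lambda_1=0$, and on how the lemma plugs into the KKT conditions for Theorem \ref{thm3}, are correct.
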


\begin{proof} We will begin by proving that $(B)$ implies $(A)$, and will then show that $(A)$ implies $(B)$.\\


\noindent \emph{Proof that ${(B)} \Rightarrow {(A)}$: } \\

\noindent By $(B)$, $n_k |S_k| = | \lambda_1 \Gamma_k + \lambda_2 \Upsilon_k|  \leq \lambda_1 |\Gamma_k| + \lambda_2 |\Upsilon_k| \leq \lambda_1 + \lambda_2 |\Upsilon_k| $. Letting $a_k = |\Upsilon_k|$, the result holds.\\


\noindent \emph{Proof that ${(A)} \Rightarrow {(B)}$:}  \\

Let $\Gamma_k$ and $\Upsilon_k$ take the following forms, for $k=1,\ldots,K$:
\begin{eqnarray}
\Gamma_k = \begin{cases}
-1 & \mathrm{\;\; if \;\;} n_k S_k > \lambda_1\\
-n_k S_k / \lambda_1  &\mathrm{\;\; if \;\;} -\lambda_1 < n_k S_k <  \lambda_1\\
1& \mathrm{\;\; if \;\;} n_k S_k < - \lambda_1
\end{cases}\\
\Upsilon_k = \begin{cases}
(-n_k S_k + \lambda_1)/\lambda_2 & \mathrm{\;\; if \;\;} n_k S_k > \lambda_1\\
0  &\mathrm{\;\; if \;\;} -\lambda_1 < n_k S_k <  \lambda_1\\
(-n_k S_k - \lambda_1)/\lambda_2 & \mathrm{\;\; if \;\;} n_k S_k < - \lambda_1
\end{cases}
\end{eqnarray}
First of all, we note by inspection that $\Gamma_k \in [-1,1]$ and that $n_k S_k + \lambda_1 \Gamma_k + \lambda_2 \Upsilon_k=0$ for $k=1,\ldots,K$. It remains to show that
$\sum_{k=1}^K \Upsilon_k^2 \leq 1$. Specifically, we will show that $\Upsilon_k^2 \leq a_k^2$ for $k = 1,\ldots,K$.
To see why this is the case, note that if $-\lambda_1 < n_k S_k <  \lambda_1$ then $0 = \Upsilon_k^2 \leq a_k^2$.
And if $n_k S_k > \lambda_1$ or $n_k S_k < -\lambda_1$, then $\Upsilon_k^2 = \left( \frac{n_k |S_k| - \lambda_1}{\lambda_2} \right)^2 \leq a_k^2$.
%
%
%
%
%
%
%
%
%
\end{proof}

\subsection*{Proof of Theorem \ref{thm2}}
We first consider the claim for the case $K=2$. By the Karush-Kuhn-Tucker \cite[KKT; see e.g.][]{BoydConvex} conditions, a necessary
and sufficient set of conditions for $\sTheta$ to be the solution to the JGL problem  is that
\begin{eqnarray}
0 &=& n_1 ({\bf \Theta}^{(1)})^{-1} - n_1 {\bf S}^{(1)} - \lambda_1 {\bf\Gamma}_1 - \lambda_2 \bf\Upsilon \nonumber \\
0 &=& n_2 ({\bf \Theta}^{(2)})^{-1} - n_2 {\bf S}^{(2)} - \lambda_1 {\bf\Gamma}_2 + \lambda_2 \bf\Upsilon,
\label{awful1st}
\end{eqnarray}
where ${\Gamma}_{1,ij}$ is the subgradient of $|\theta_{ij}^{(1)}|$ with respect to $\theta_{ij}^{(1)}$,
${\Gamma}_{2,ij}$ is the subgradient of  $|\theta_{ij}^{(2)}|$ with respect to $\theta_{ij}^{(2)}$,
and
  $\Upsilon_{ij}$ is the subgradient of $|\theta_{ij}^{(1)}-\theta_{ij}^{(2)}|$ with respect to $\theta_{ij}^{(1)}$.

Let $C_1$ and $C_2$ be a partition of the $p$ variables into two nonoverlapping sets, with
$C_1 \cap C_2 = \emptyset$, $C_1 \cup C_2 = \{ 1, \ldots, p \}$.
Consider the matrices
\begin{equation}
\bTheta^{(1)} = \left( \begin{array}{cc} \bTheta^{(1)}_{1} & 0 \\ 0 & \bTheta^{(1)}_{2} \end{array} \right), \;\;  \bTheta^{(2)} = \left( \begin{array}{cc} \bTheta^{(2)}_{1} & 0 \\ 0 & \bTheta^{(2)}_{2} \end{array} \right),
\label{solnxxx}
\end{equation}
where $\bTheta^{(1)}_1$ and $\bTheta^{(2)}_1$ solve the JGL problem on the features in $C_1$, and $\bTheta^{(1)}_2$ and $\bTheta^{(2)}_2$ solve the JGL problem on the features in $C_2$.
By inspection of (\ref{awful1st}), $\bTheta^{(1)}$ and $\bTheta^{(2)}$ solve the entire JGL optimization problem
 if and only if for all $i \in C_1$, $j \in C_2$, there exist $\Gamma_{1,ij}, \Gamma_{2,ij}, \Upsilon_{ij} \in [-1,1]$ such that
\begin{eqnarray}
-n_1 S_{ij}^{(1)} - \lambda_1 \Gamma_{1,ij} - \lambda_2 \Upsilon_{ij} &=& 0 \nonumber \\
-n_2 S_{ij}^{(2)} - \lambda_1 \Gamma_{2,ij} + \lambda_2 \Upsilon_{ij} &=& 0.
\end{eqnarray}
 Therefore, by Lemma \ref{lemma1}, the proof of the claim for the case $K=2$ is complete.

 The derivation of the necessary condition for the case $K>2$ is simple and we omit it here.

\subsection*{Proof of Theorem \ref{thm3}}
We note that Theorem \ref{thm3}'s condition (\ref{thm3condition}) is equivalent to the following:
\begin{equation}
|n_k S^{(k)}_{ij} | \leq \lambda_1 + \lambda_2 a_{ij,k} \mathrm{\;\; for\; all \;\;} i \in C_1, \; j \in C_2, \; k=1,\ldots,K
\label{thm3condition_dw}
\end{equation}
where $a_{ij,1}, \ldots, a_{ij,K}$ are scalars that satisfy $\sum_{k=1}^K a_{ij,k}^2 \leq 1$.
We will prove that (\ref{thm3condition_dw}) is necessary and sufficient for the variables in $C_1$ to be completely disconnected from those in $C_2$ in each of the resulting network estimates. \\

 By the KKT conditions, a necessary
and sufficient set of conditions for $\sTheta$ to be the solution to the JGL problem  is that
\begin{equation}
0 = n_k ({\bf \Theta}^{(k)})^{-1} - n_k {\bf S}^{(k)} - \lambda_1 {\bf\Gamma}_k - \lambda_2 {\bf\Upsilon}_k
\label{awful}
\end{equation}
for $k=1,\ldots,K$.
In (\ref{awful}), ${\Gamma}_{k,ij}$ is the subgradient of $|\theta_{ij}^{(k)}|$ with respect to $\theta_{ij}^{(k)}$,
and $({\Upsilon}_{1,ij}, \ldots,{\Upsilon}_{K,ij})$ is the subgradient of
$\sqrt{\sum_{k=1}^K (\theta_{ij}^{(k)})^2}$ with respect to $(\theta_{ij}^{(1)}, \ldots,\theta_{ij}^{(K)})$.

Let $C_1$ and $C_2$ be a partition of the $p$ variables into two nonoverlapping sets, with
$C_1 \cap C_2 = \emptyset$, $C_1 \cup C_2 = \{ 1, \ldots, p \}$.
Consider the matrices of the form
\begin{equation}
\bTheta^{(k)} = \left( \begin{array}{cc} \bTheta^{(k)}_{1} & 0 \\ 0 & \bTheta^{(k)}_{2} \end{array} \right)
\label{solnxxx}
\end{equation}
for $k=1,\ldots,K$,
where $\bTheta^{(1)}_1, \ldots, \bTheta^{(K)}_1$ solve the JGL problem on the features in $C_1$, and $\bTheta^{(1)}_2, \ldots, \bTheta^{(K)}_2$ solve the JGL problem on the features in $C_2$. 
By inspection of (\ref{awful}), $\bTheta^{(1)},\ldots,\bTheta^{(K)}$ solve the entire JGL optimization problem
 if and only if for all $i \in C_1$, $j \in C_2$, there exist $\Gamma_{1,ij}, \ldots, \Gamma_{K,ij} \in [-1,1]$ and $\Upsilon_{1,ij},\ldots, \Upsilon_{K,ij}$ satisfying $\sum_{k=1}^K \Upsilon_{k,ij}^2  \leq 1$ such that
\begin{equation}
-n_k S_{ij}^{(k)} - \lambda_1 \Gamma_{k,ij} - \lambda_2 \Upsilon_{k,ij} = 0
\end{equation}
 Therefore, by Lemma \ref{lemma2}, the proof is complete.

\section*{Appendix 3: Additional simulations for two-class datasets}
\label{moresims}

We first present results for a simulation study similar to the one in Section \ref{K3sims}, but with only two classes.
%
%
Taking an approach similar to the one described in Section \ref{K3sims},
we defined two networks with $p=500$ features belonging to ten equally
sized unconnected subnetworks, each with a power law degree
distribution.
Of the ten subnetworks, eight have the same structure and edge values
in both classes, and two are present in only one class.
Class 1's network has 490 edges, 94 of which are not present in class 2.
We generated covariance matrices as described in Section \ref{setup}.
Again, we simulated 100 datasets with $n=150$ observations per class.
The results shown in Figure \ref{Sixplots.K2} are similar to the results in Section \ref{one}.

%

\begin{figure}[htp]
\centering
\includegraphics[width=0.25\linewidth]{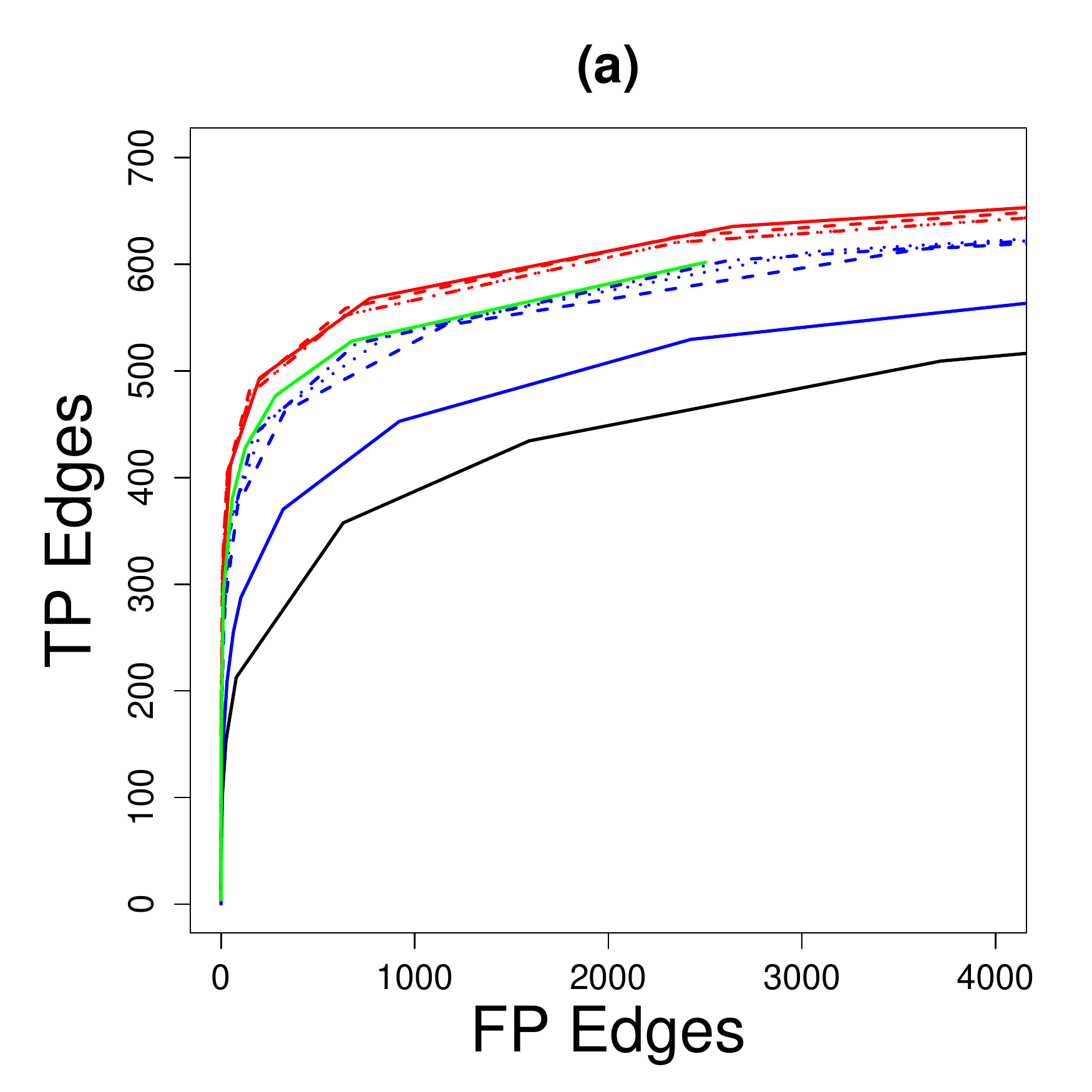}   
\includegraphics[width=0.25\linewidth]{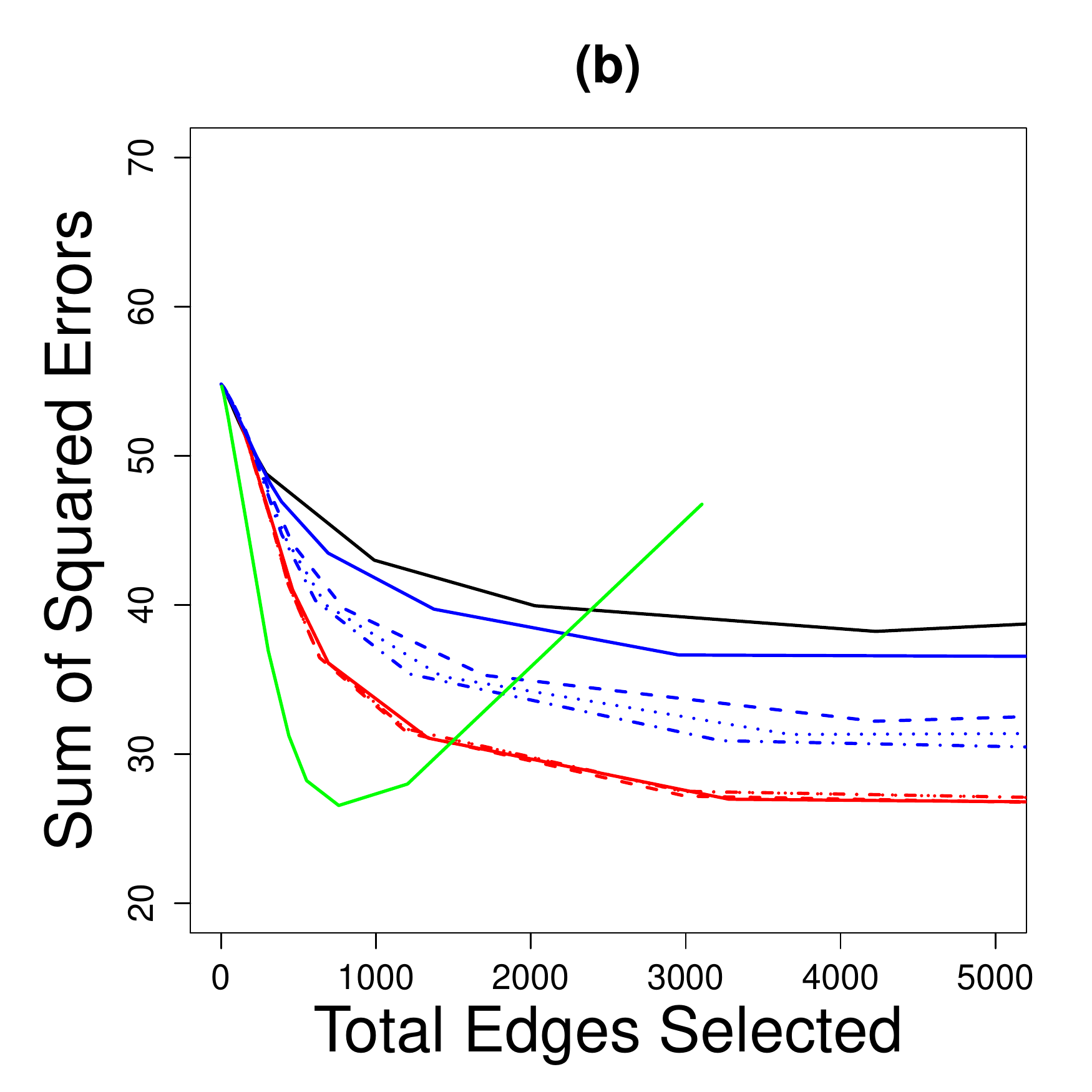}
\includegraphics[width=0.25\linewidth]{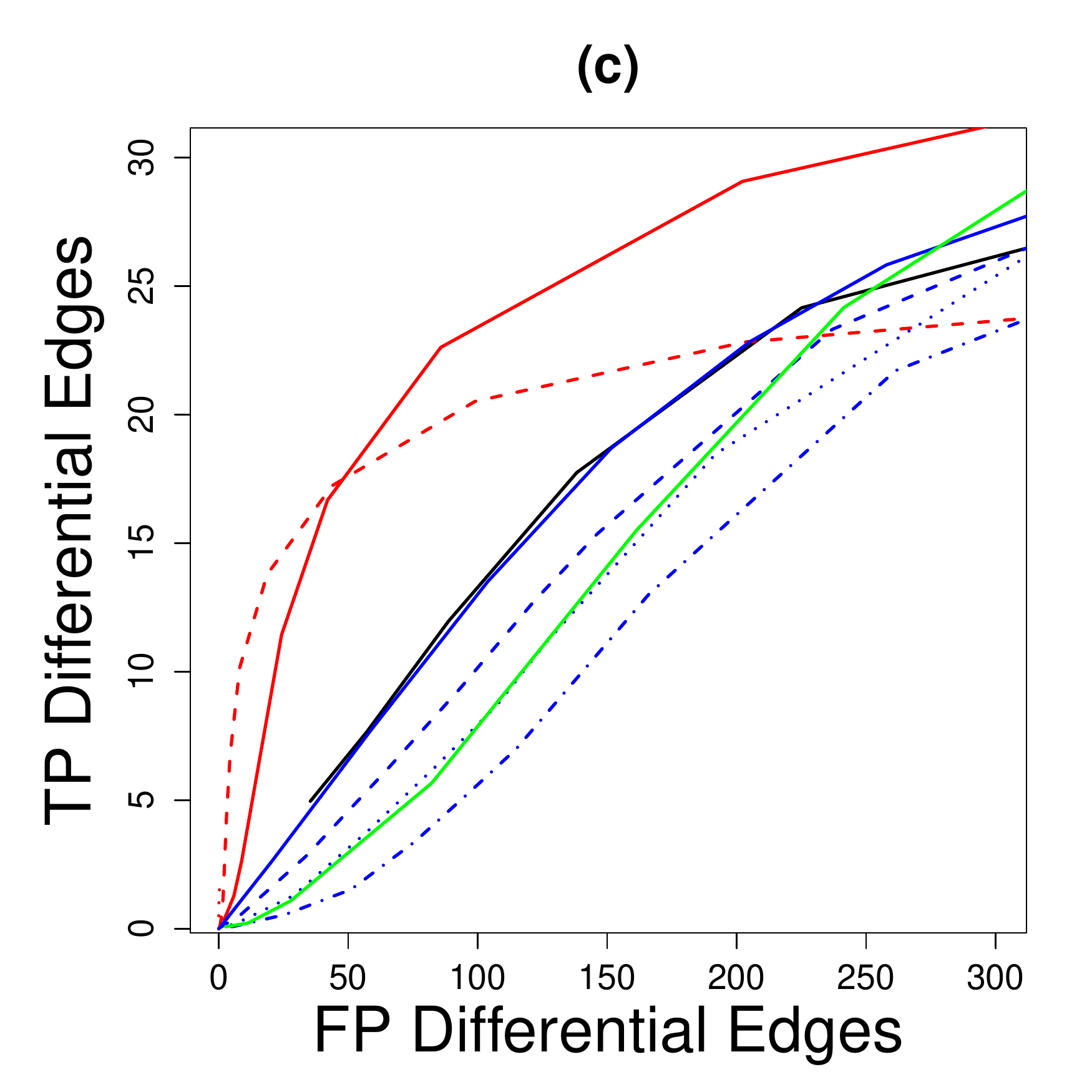}
\includegraphics[width=0.25\linewidth]{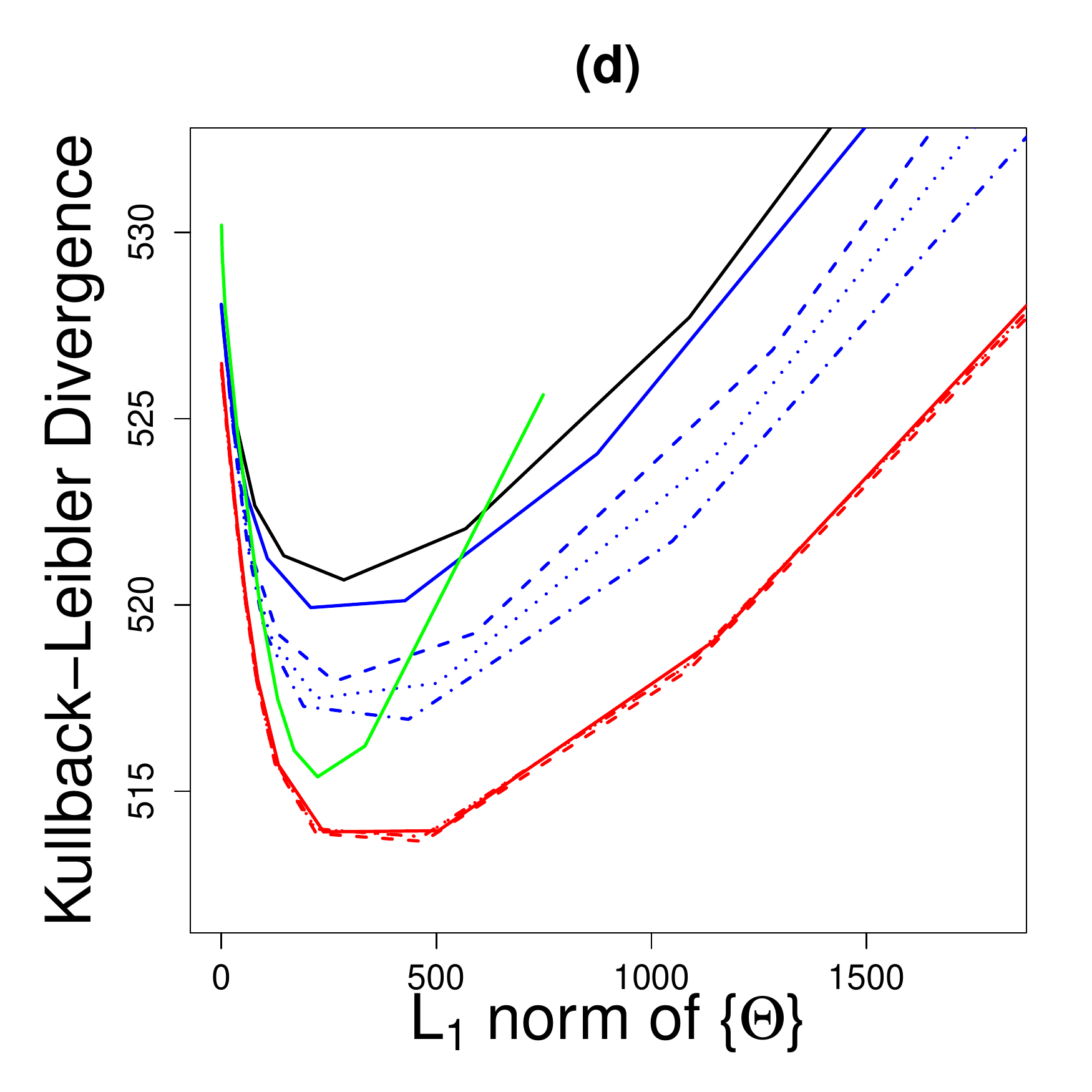}
\includegraphics[width=0.25\linewidth]{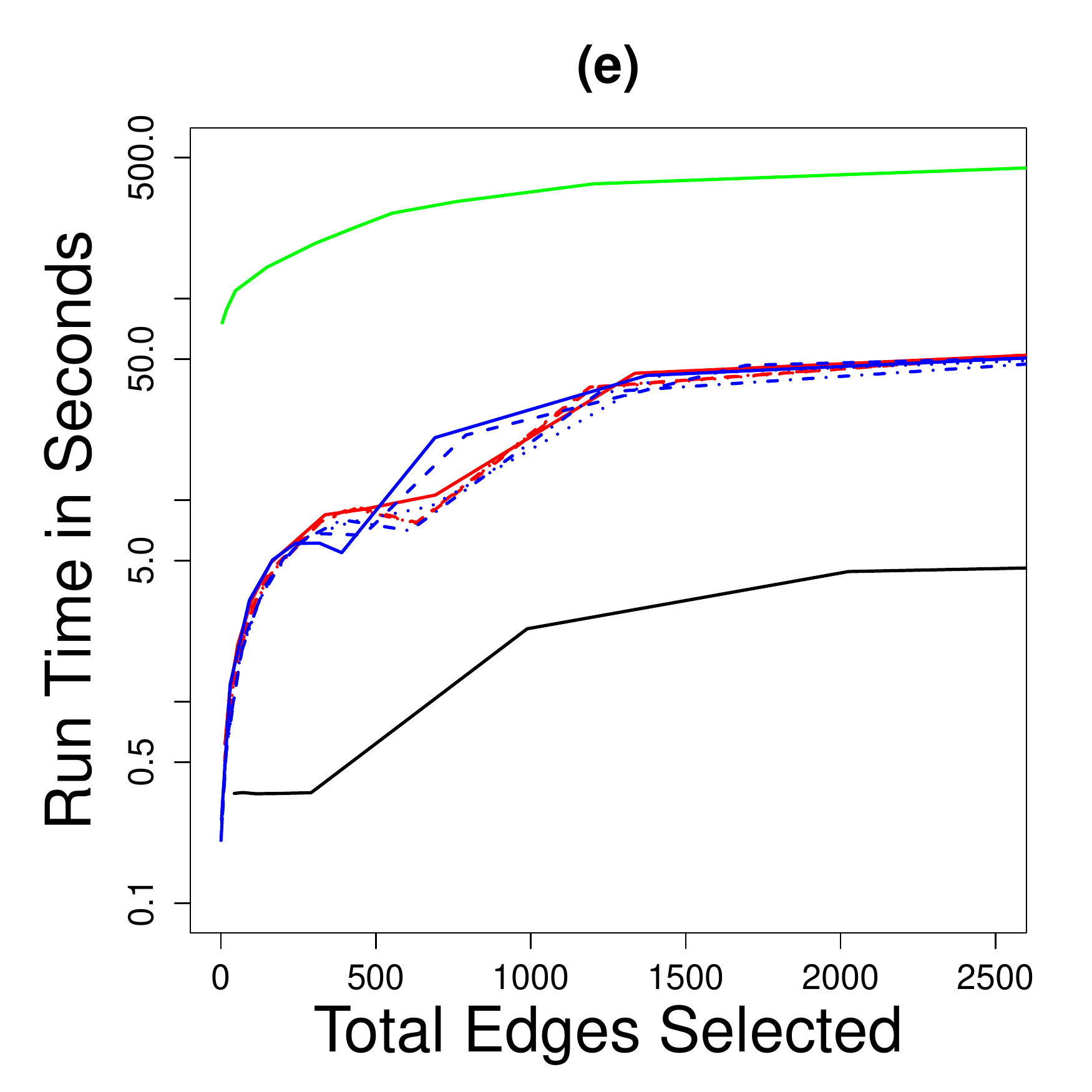}
\includegraphics[width=0.25\linewidth]{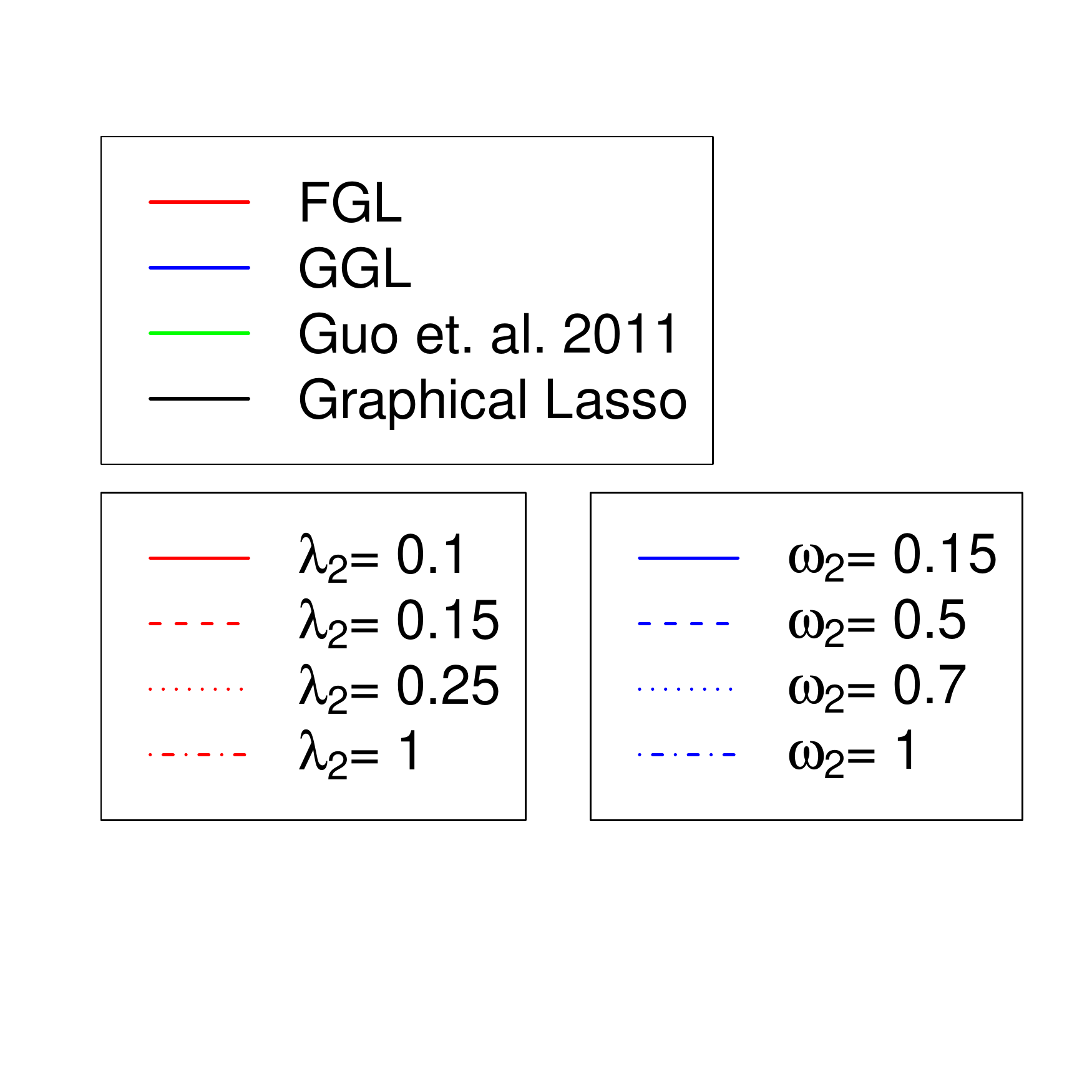}
\caption{\label{Sixplots.K2}
 \it Performance of FGL, GGL, \protect\citet{Guo2011}'s method, and the graphical lasso on simulated data with $150$ observations in each of 2 classes, and $500$ features
 corresponding to ten equally sized unconnected subnetworks drawn from a power law distribution.
 Details are as given in Figure \ref{Sixplots}.}
\end{figure}

We also simulated data with an entirely different network structure.
Instead of the block-diagonal network structure used in the previous simulations, in this simulation we generated data
drawn from a single large power law network.
We defined class 1's network to be a single power law network with only one component and 
generated ${\bf \Sigma}_1$ as described in Section \ref{setup}.
We then identified a branch in this network connected to the rest of the network through only one edge.
We then let ${\bf \Sigma}_2^{-1}$ equal ${\bf \Sigma}_1^{-1}$, 
except for the elements corresponding to the edges in the selected branch, which were set to be zero instead.
Finally, we defined ${\bf \Sigma}_2$ by inverting ${\bf \Sigma}_2^{-1}$, and generated the two classes' data using ${\bf \Sigma}_1$ and ${\bf \Sigma}_2$.
This yielded distributions based on two power law networks  that were identical
 except for a missing branch in class 2.
Class 1's network has 499 edges, 104 of which are not present in class 2.
We simulated 100 datasets with $n=150$ observations per class.
Figure \ref{Sixplots.onebignet} shows the results, averaged over the 100 data sets.
Again, FGL and GGL were superior to or competitive with the other methods.  
%
%


\begin{figure}[htp]
\centering
\includegraphics[width=0.25\linewidth]{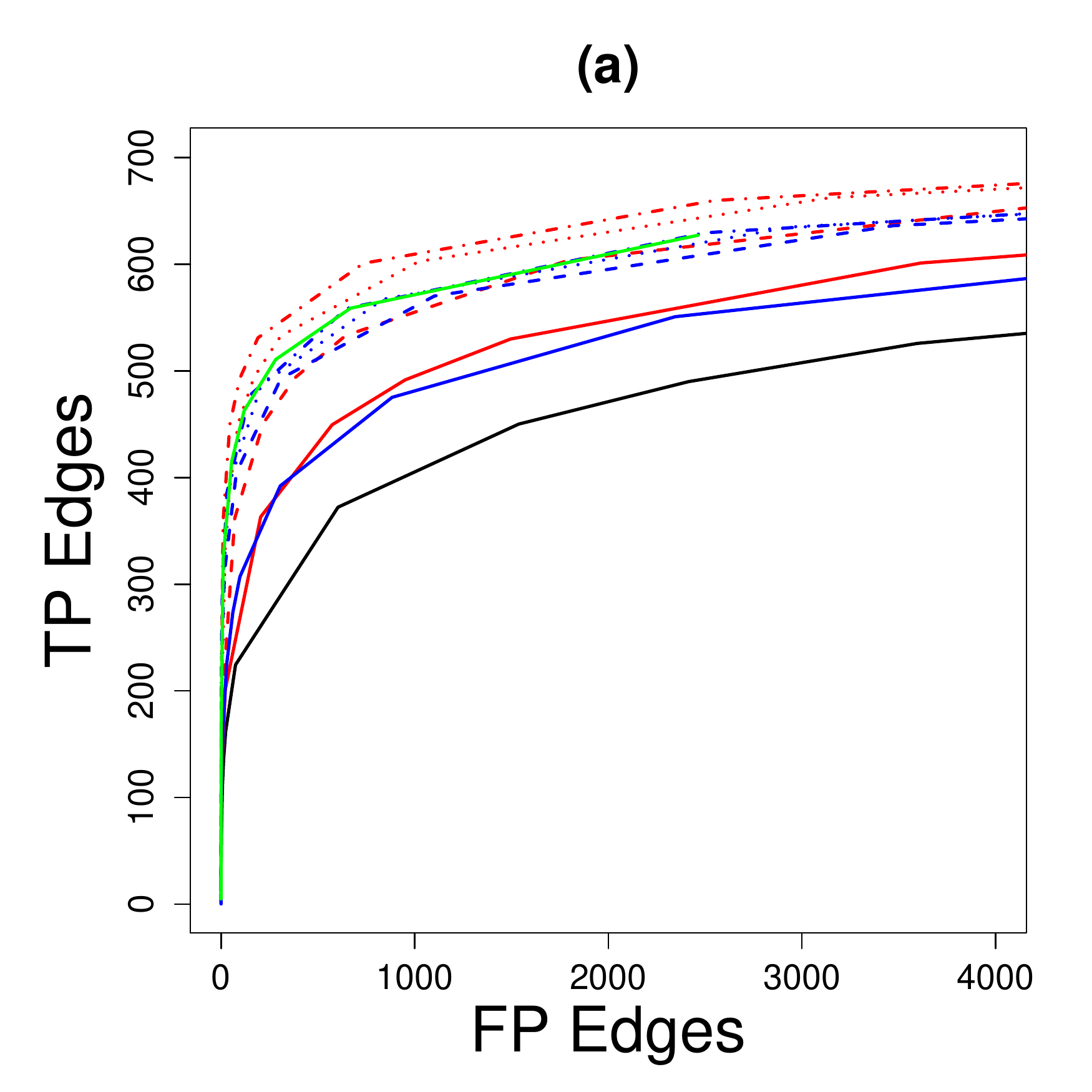} 
\includegraphics[width=0.25\linewidth]{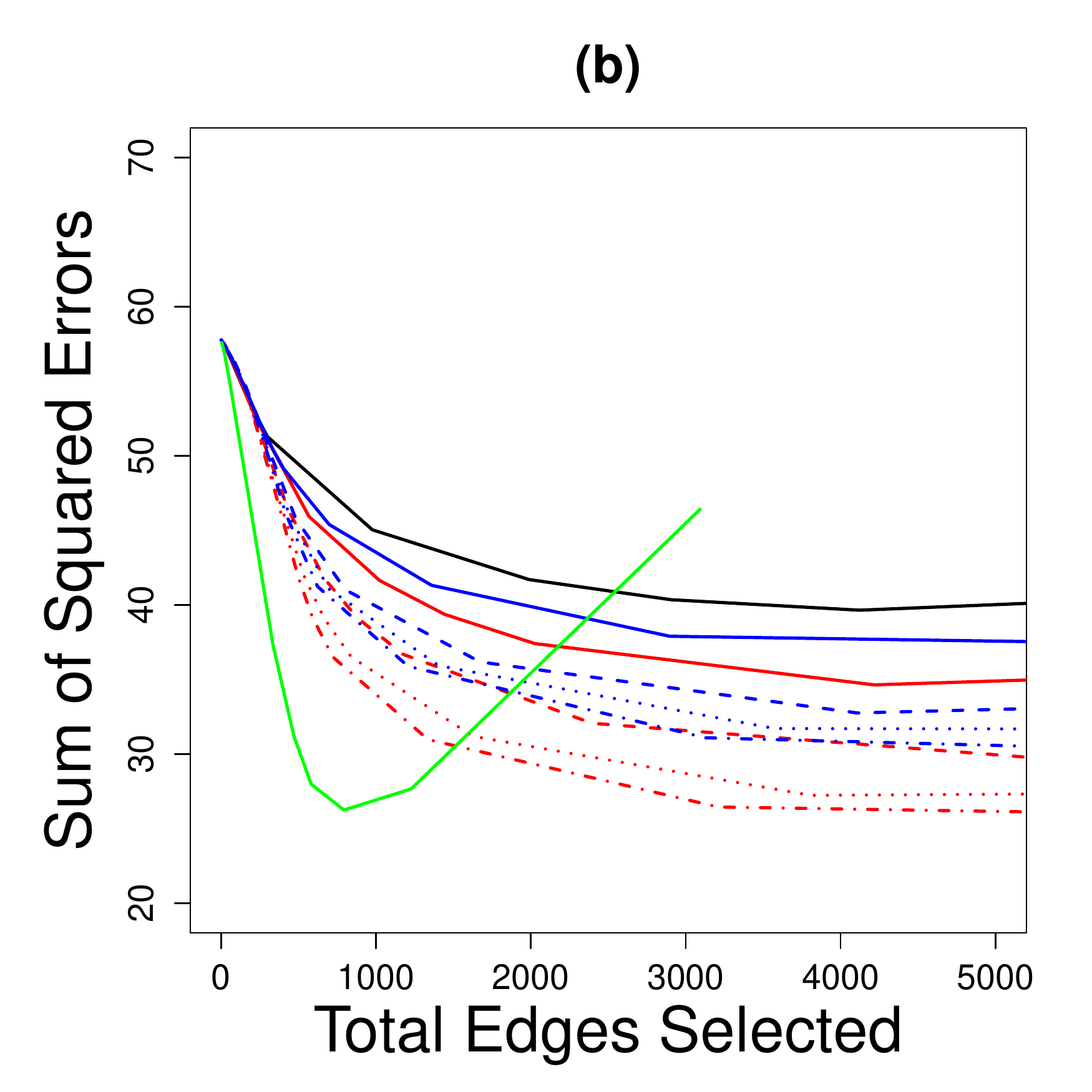}
\includegraphics[width=0.25\linewidth]{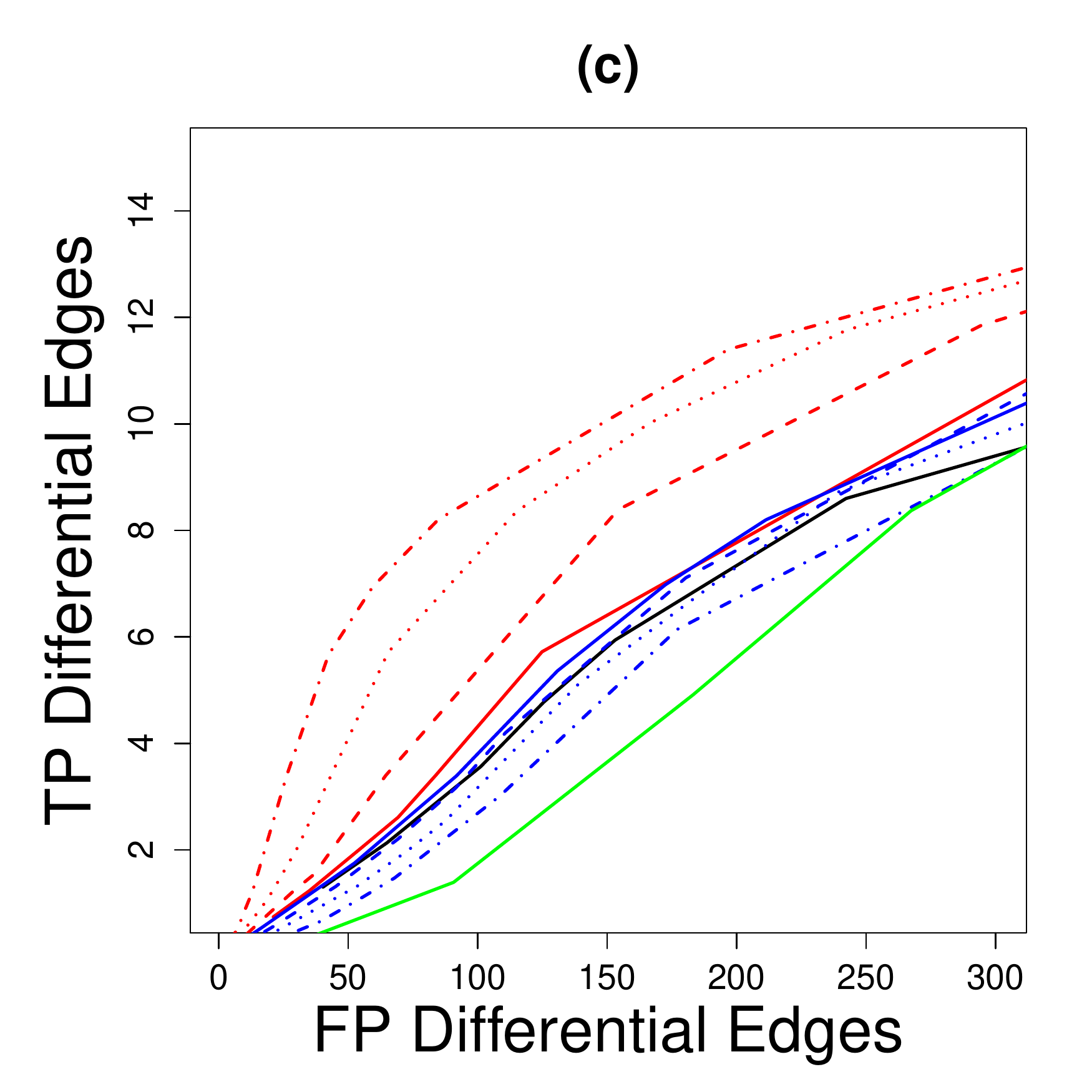}
\includegraphics[width=0.25\linewidth]{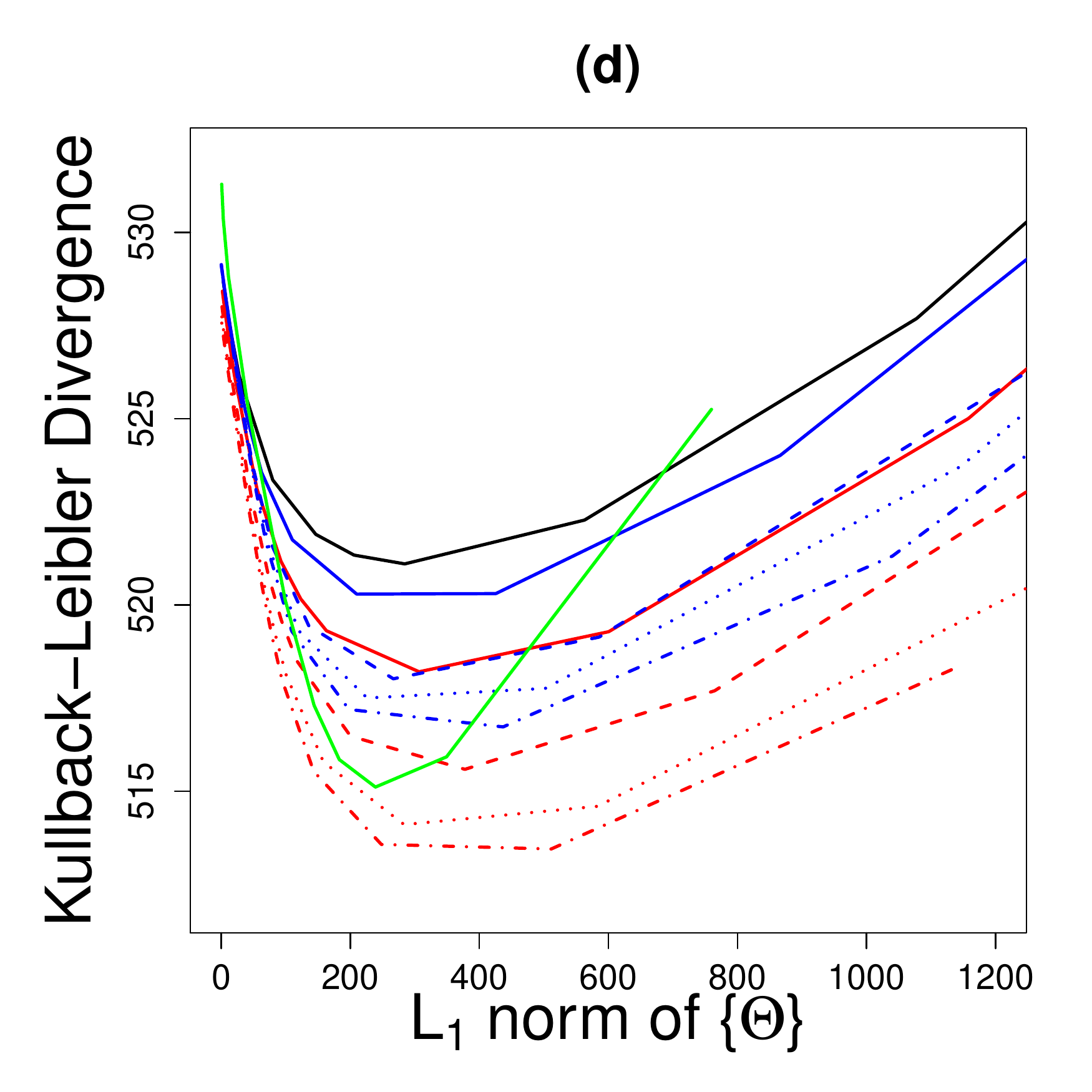}
\includegraphics[width=0.25\linewidth]{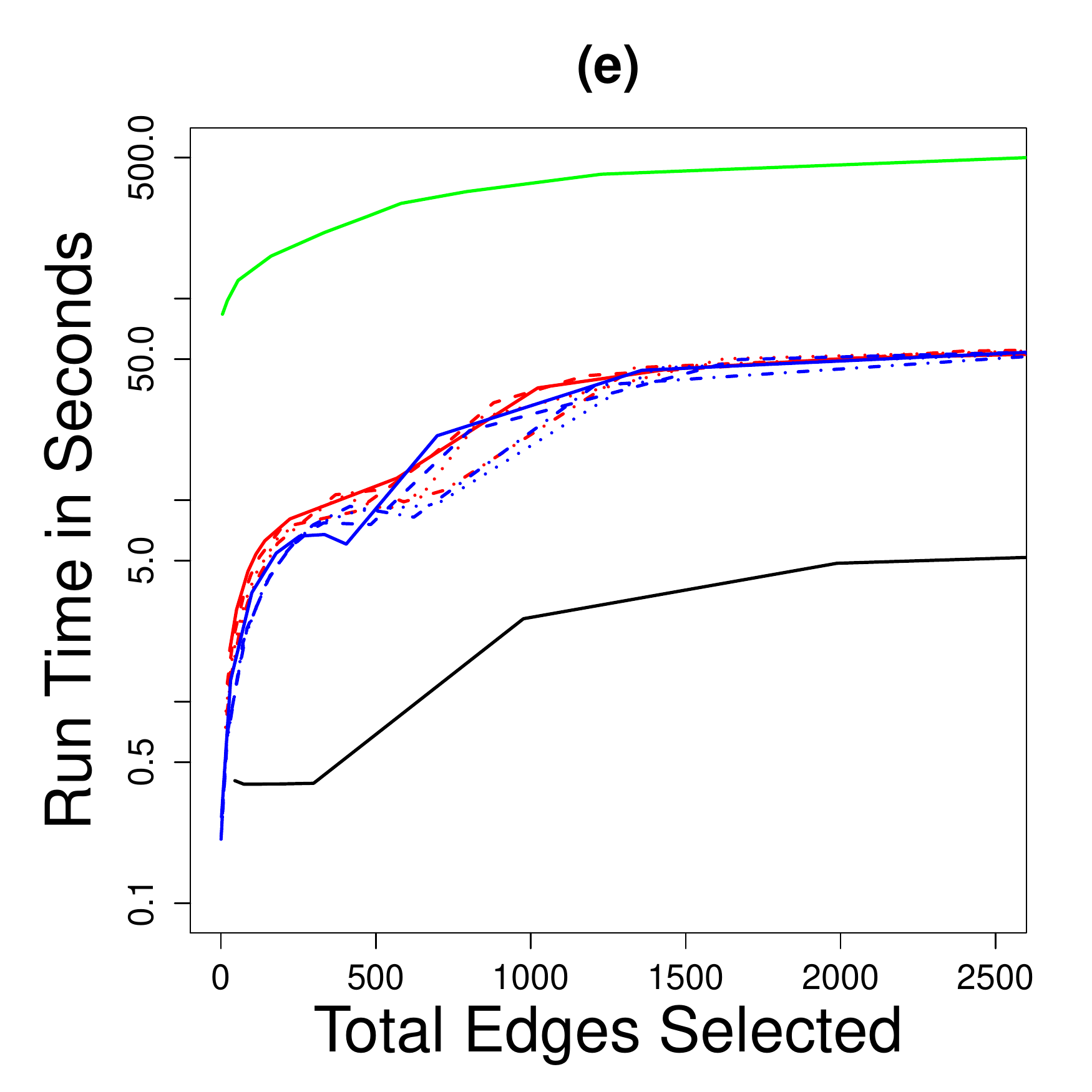}
\includegraphics[width=0.25\linewidth]{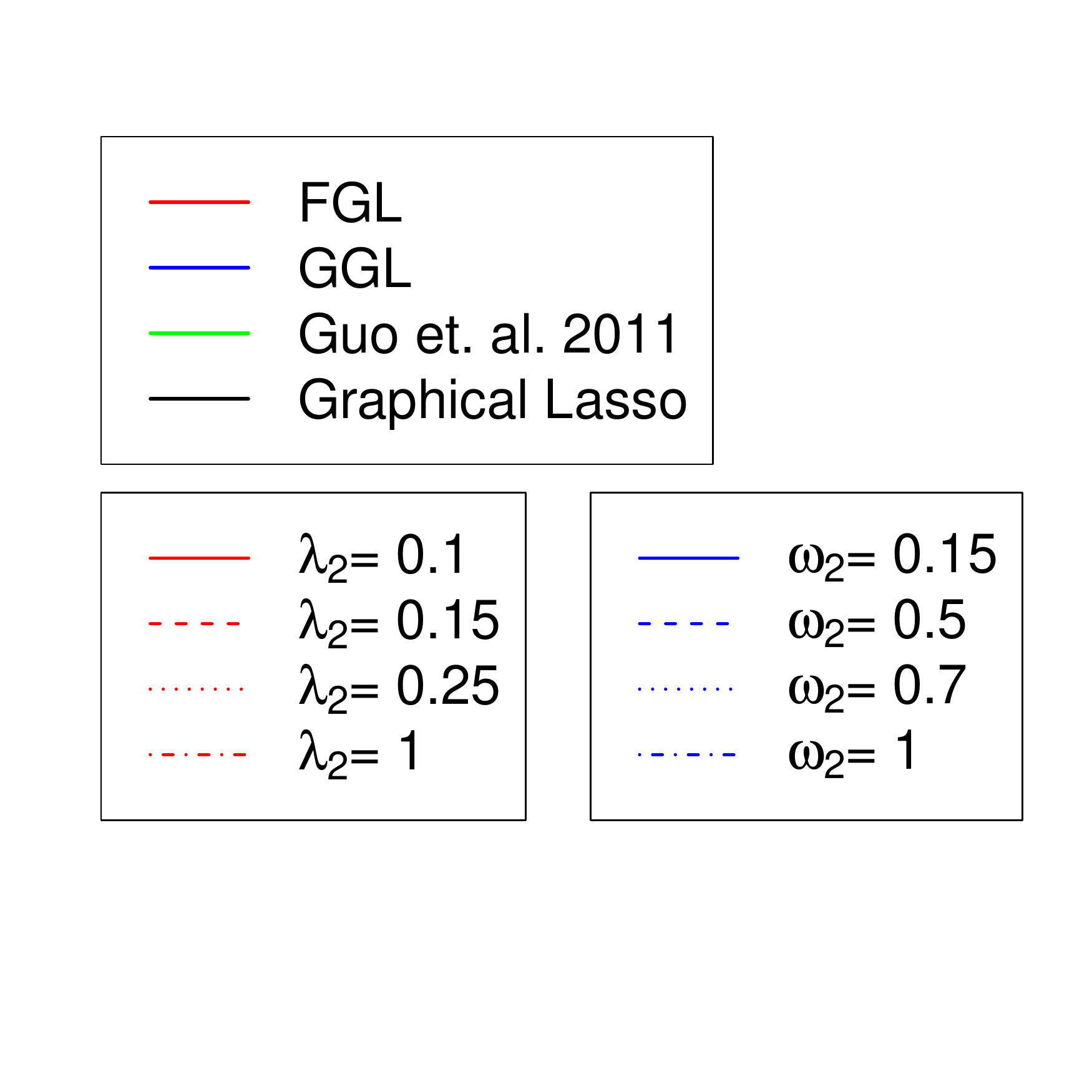}
\caption{\label{Sixplots.onebignet}
 \it Performance of FGL, GGL, \protect\citet{Guo2011}'s method, and the graphical lasso on simulated data with $150$ observations in each of 2 classes, and $500$ features
 corresponding to a single large power law network.
 Details are as given in Figure \ref{Sixplots}.}
\end{figure}

\section*{Appendix 4: Network structure used in simulations}
\label{simsnet}
The network structure for the simulations described in Section \ref{K3sims} is displayed in Figure \ref{net_used_in_sims}.
Black edges are shared between all three classes' networks, green edges are present only in classes 1 and 2,
and red edges are present only in class 1.

\begin{figure}[htp]
\centering
\includegraphics[width=0.4\linewidth]{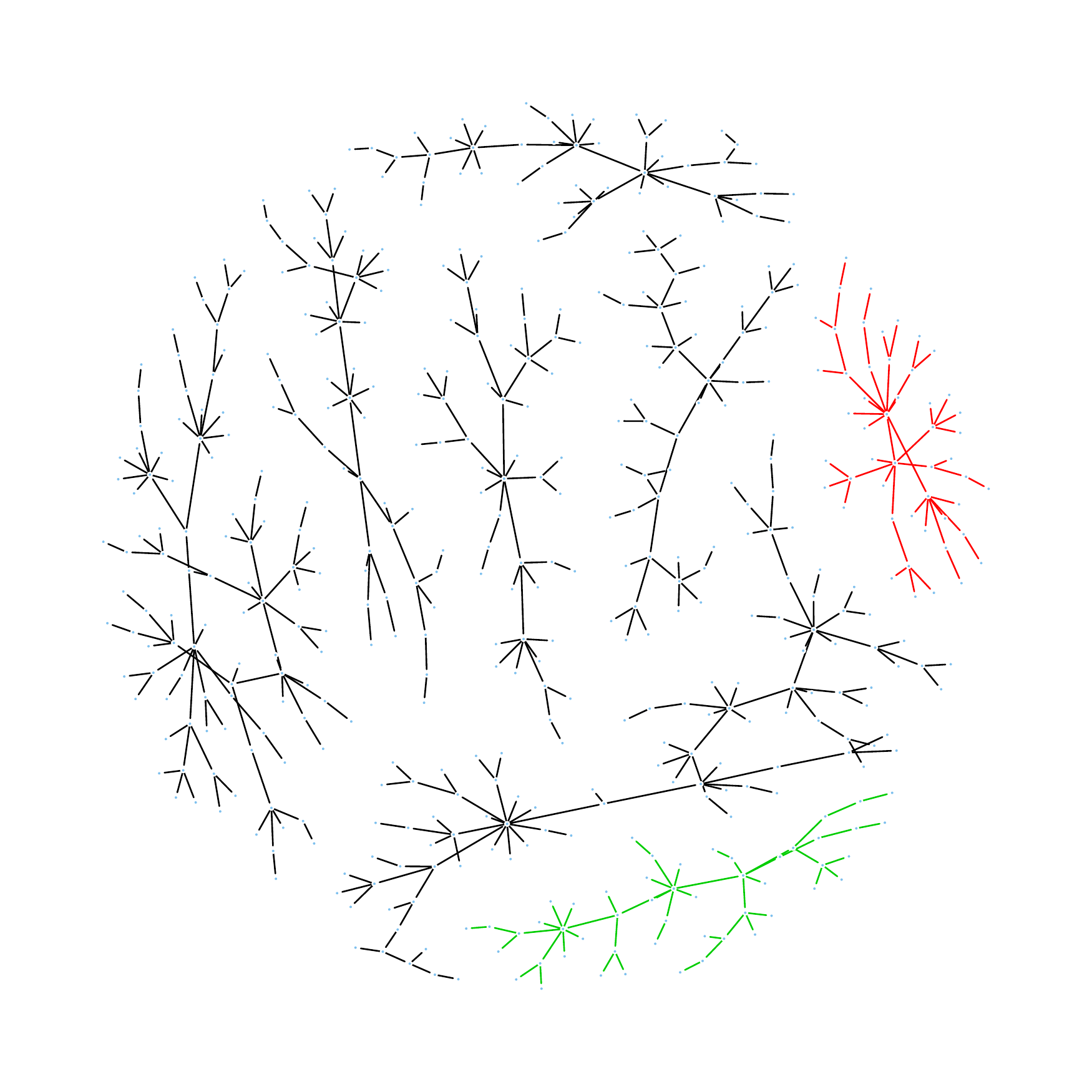}
 \caption{\it{ \label{net_used_in_sims}} \it Network used to generate simulated datasets for Figure \ref{Sixplots} in Section \ref{one}.  Black edges are common to all three classes, green edges are present only in classes 1 and 2,
and red edges are present only in class 1.}
\end{figure}


\normalsize

\end{document}